\documentclass[sigconf]{acmart}

\setcopyright{acmlicensed}
\acmYear{2027}
\copyrightyear{2027}
\acmConference[SIGMOD '27]{SIGMOD Conference on Management of Data}
  {June 13--19, 2027}{Huntington Beach, CA, USA}
\acmDOI{XXXXXXX.XXXXXXX}
\acmISBN{978-x-xxxx-xxxx-x/YYYY/MM}

\usepackage{amsmath}
\usepackage{bm}
\usepackage{graphicx}
\usepackage{etoolbox}
\usepackage{enumitem}
\usepackage{algorithmic}
\usepackage{multirow}
\usepackage{amsthm}
\usepackage{booktabs}
\usepackage{xspace}
\usepackage{xcolor}

\usepackage{bbold}
\usepackage{multirow}
\usepackage{diagbox}
\usepackage{balance}
\usepackage{float}
\usepackage[noend,lined,boxed,vlined,ruled,linesnumbered]
{algorithm2e}
\usepackage{makecell}

\usepackage{array}
\usepackage{readme-method}
\usepackage{readme-results}
\usepackage{pgfplots}
\usepgfplotslibrary{groupplots}
\usetikzlibrary{patterns,arrows.meta}
\pgfplotsset{compat=1.18}

\newcolumntype{C}[1]{>{\centering\arraybackslash}p{#1}}

\definecolor{myyellow}{HTML}{fcefd2}
\definecolor{mydarkyellow}{HTML}{FAD02E}

\definecolor{mygreen}{HTML}{d2fce6}
\definecolor{mypurple}{HTML}{f2d2fc}
\definecolor{peach}{HTML}{FFDAB9}
\definecolor{fpred}{RGB}{255,99,99}

\usepackage[labelformat=simple]{subcaption}

\SetKwInOut{Input}{input}
\SetKwInOut{Output}{output}

\DeclareMathOperator*{\argmax}{arg\,max}

\newcommand{\code}[1]{\texttt{\small #1}}

\usepackage{booktabs}
\usepackage{multirow}

\newcommand{\ignore}[1]{}

\newtheorem{df}{Definition}
\newtheorem{ex}{Example}
\newtheorem{pr}{Proposition}
\newtheorem{lemm}{Lemma}
\newtheorem{lem}{Theorem}

\newenvironment{example}{\begin{ex} \nopagebreak
\begin{rm}}{{\hfill$\Box$}\end{rm}\end{ex}} 

\newenvironment{definition}{\begin{df} \nopagebreak
\begin{rm}}{{}\end{rm}\end{df}}

\newcommand{\stitle}[1]{\vspace{1ex}\noindent{\bf #1}}

\usepackage{etoolbox}

\newtoggle{full}
\toggletrue{full}

\begin{document}


\title{\method: \\ Documenting Table Relationships for Diverse Data Tasks}

\author{Jiaming Liang}
\authornote{Part of work done while at Microsoft.}
\affiliation{
  \institution{University of Pennsylvania}
  \city{Philadelphia}
  \state{Pennsylvania}
  \country{USA}
}

\author{Yeye He}
\authornote{Correspondence: liangjm@seas.upenn.edu, yeyehe@microsoft.com}
\affiliation{
  \institution{Microsoft Research}
  \city{Redmond}
  \state{Washington}
  \country{USA}
}

\author{Haoyu Dong}
\affiliation{
  \institution{Microsoft Research}
  \city{Redmond}
  \state{Washington}
  \country{USA}
}

\author{Zachary Ives}
\affiliation{
  \institution{University of Pennsylvania}
  \city{Philadelphia}
  \state{Pennsylvania}
  \country{USA}
}

\author{Surajit Chaudhuri}
\affiliation{
  \institution{Microsoft Research}
  \city{Redmond}
  \state{Washington}
  \country{USA}
}



\begin{abstract}
Large language models (LLMs) can perform diverse data tasks when given a small set of relevant tables. However, for complex tasks requiring multiple tables over large data lakes, identifying relevant tables for the task becomes a prerequisite, but remains  challenging. This is because there are often many tables in a data lake with similar table and schema names, and limited additional semantic context, thereby forcing agents to spend substantial time exploring tables at online query time, greatly affecting models' ability to perform tasks accurately, while increasing cost/latency substantially.

We present \method, a novel table documentation approach that uses lightweight offline processing to infer and index rich \emph{table-to-table relationships}, including confusable, joinable, and union-able tables. 
We further develop a suite of agentic tools for LLMs to work with the index, to help agents retrieve relevant tables and perform tasks much more efficiently and effectively. Finally, we study how to optimize our index construction with a budget constraint. On data lakes containing up to 12K tables, \method improves task accuracy by as much as 40 percentage points across a range of data discovery, data analysis, and data science tasks,  outperforming diverse baselines, including powerful coding agents with unlimited time and monetary budgets. We also show that equipping existing methods (such as coding agents) with \method can not only lead to consistent quality gains, but also significantly reduce their latency by up to 6x, demonstrating the potential of \method.
\end{abstract}

\maketitle
\fancyhead{}

\section{Introduction}
\label{sec:intro}

Large language models (LLMs) are increasingly capable of performing diverse tasks over tabular data, when given a small set of tables relevant to a task \cite{table-gpt, narayan2022can}. However, users
often need to perform data discovery, data analysis, and data science over a
large data lake containing thousands of tables, without knowing in advance
which tables are relevant \cite{halevy2016goods, fernandez2018aurum}. Such tasks remain challenging for today's models and agentic harnesses \cite{kramabench, wang2026lakeqa}.

Consider a concrete example data lake tested in our experiments, illustrated in
Figure~\ref{fig:system-overview}. It contains 17 tables named
``\code{Sales},'' 29 named ``\code{Customers},'' and 21 named
``\code{Country},'' many with similar schemas. To answer a question such as
``\code{global sales by customer country}'' in a particular region and time period, a human
or LLM agent must determine not only which \code{Sales} table is relevant, but
also which \code{Customers} and \code{Country} tables can be ``paired/joined'' with that \code{Sales} table to answer the question, which is unfortunately non-trivial in a large data lake, as table and schema names alone provide little clue for LLM to identify relevant tables. 

Therefore, today existing methods (including powerful data agents and coding agents~\cite{githubcopilotcli, codexcli, claudecode}) are often forced to open and explore a large number of tables that may be relevant  at online query time,  (e.g., all 17 \code{Sales} tables and all 29
\code{Customers} tables) before locating the ones relevant for a task.
This approach is clearly (1)~\underline{time-consuming}, because users must wait while
the agent explores the data; (2)~\underline{less accurate}, because the agent must often decide using partial, shallow evidence collected at runtime; and
(3)~\underline{wasteful}, as each new question triggers another expensive
round of exploration from scratch whose findings are seldom reused.

\stitle{README for codebases vs.\ README for data lakes.}
We draw inspiration from software engineering, where humans and LLMs face
a similar readability/discoverability problem in complex codebases, which is commonly addressed with a
\texttt{README}~\cite{README} and, increasingly, an \texttt{AGENTS.md} for LLM
agents~\cite{Agentsmd}. Placed at the root of a repository, these readme documents
explain what the code does and how its components relate to each other with navigatable links, allowing developers
and agents to reuse knowledge rather than rediscover it from scratch.

This analogy motivates our core research question: \emph{Can a
README-equivalent for tables benefit data agents operating over data lakes,
just as READMEs benefit coding agents operating over codebases?}

\stitle{\method for tables: relationships beyond simple summaries.}
A natural first step in documenting a dataset is to prompt an LLM to summarize
each table and column in natural language, as used in recent work such as
AutoDDG~\cite{autoddg}. 

While such approaches are clearly useful, we observe that free-form descriptions of \emph{individual tables} are insufficient
for many complex data tasks, which often depend on \emph{table-to-table relationships}, and would benefit from a more \textit{structured and holistic} view of
the lake. This resembles how a codebase README would not only describe files in
isolation, but also use pointers and hyperlinks to connect related components. We illustrate this with an example.

\begin{figure*}[t]
\vspace{-18mm}
  \centering
  \includegraphics[width=0.86\textwidth]{figures/system_overview.pdf}
  \vspace{-4mm}
  \caption{Overview of \method. An example \method in (b), illustrated with
    abridged table-level entries, is the output of offline indexing (a)
    and the input to online agentic tools (c). An agent at online query time can utilize the provided agentic tools to retrieves tables relevant to a given query, follows documented relationships, and inspects raw tables.}
  \label{fig:system-overview}
  \vspace{-4mm}
\end{figure*}

\begin{example}
\label{ex:intro-sales}
Consider the data lake we encountered in our experiments, whose 17 \code{Sales.csv} tables are labeled
\code{Sales\_1} through \code{Sales\_17} in Figure~\ref{fig:system-overview}(b) for ease of reference. Prior work such as AutoDDG would perform a per-table prompt asking LLM to describe each table using natural language, and populate such descriptions (semantic meaning, grains, statistical profiles of columns, etc.) in the leftmost column labeled as ``\code{Basic info (B)}'' in
Figure~\ref{fig:system-overview}(b). This is already helpful for retrieval tasks, especially for tables and columns with abbreviated or cryptic names~\cite{autoddg}.

However, given a question such as $Q =$ ``\code{global sales by customer country since 2024}'',
an agent still cannot easily determine from \emph{independent per-table descriptions}
which of the 17 \code{Sales}, 29 \code{Customers}, and 21 \code{Country}
tables should be used to answer the question.

In this work, we argue that agents need richer \emph{cross-table relationships}, including
(1)~confusable tables, (2)~joinable tables, and (3)~union-able tables, to answer
complex questions over large lakes. We extract such relationships
offline through a lightweight LLM-powered indexing step, as illustrated in
Figure~\ref{fig:system-overview}(a).

Specifically, the ``\code{Confusable tables (C)}'' column in
Figure~\ref{fig:system-overview}(b) explains how similarly-named and seemingly-identical tables actually differ in content. For example, \code{Sales\_1} and \code{Sales\_5}
cover different geographic regions, \code{Sales\_2} and \code{Sales\_8} have
different grains, and \code{Sales\_17} describes stock-sales rather than retail-sales. We extract these distinctions through an offline analysis that compares
similar tables and identifies distinguishing features, enabling efficient online consumption.

The next two columns in the figure similarly record how a table can join with or be
unioned with other tables in the lake, also extracted through a deep but efficient offline process.

At online query time, we develop agentic tools that let LLMs navigate these rich cross-table relationships and identify relevant
tables, as illustrated in
Figure~\ref{fig:system-overview}(c).  For our example query $Q$, by following
the pointers in \method, an agent can accurately and efficiently identify
\{\code{Sales\_1}, \code{Sales\_14}, \code{Country\_1},
\code{Customer\_9}\} as the tables needed to answer $Q$.
\end{example}

In this work we study the \method framework using three example types of table-relationships (extensible to additional types): (1)~\emph{confusable tables}, which look similar but differ in grain,
geography, time period, aggregation level, or other properties;
(2)~\emph{joinable tables}, which have a meaningful semantic join rather than
an accidental overlap in key ranges; and (3)~\emph{union-able tables}, which
share a compatible schema. 

We aim to show that, at the framework level, plugging in different types of relationships can bring substantial quality and latency benefits for diverse agent tasks operating on large data lakes.  While we present efficient techniques for extracting these relationships that can scale to large lakes, our goal is not to establish new state-of-the-art methods for each individual relationship type, which has a literature of its own. Instead, the framework is designed to accommodate alternative methods, allowing individual components to be readily replaced as better techniques emerge.




\stitle{Technical challenges.}
Producing this documentation for tens of thousands of tables raises
challenges that a per-table annotation method like AutoDDG~\cite{autoddg} does not face.

\underline{Efficient extraction of relationships}: The number of possible
relationships grows quadratically with the number of tables, where processing using LLM calls can become expensive very quickly.  We develop extraction algorithms
that scale linearly with the number of tables, or even sub-linearly if there is a fixed cost budget.

\underline{Resolving semantic ambiguity}: At scale, 
determining whether two tables are meaningfully related (e.g., confusable or joinable) becomes increasingly challenging and ambiguous. We develop data-driven techniques to resolve such ambiguities while adopting a late-binding strategy that retains plausible relationships until query time, when the user's question can provide additional evidence for disambiguation.

\underline{Agentic tools for \method}: In a large lake, even concise
documentation such as Figure~\ref{fig:system-overview}(b) may exceed an
agent's context window. We develop tools that let an agent navigate the
\method index efficiently (e.g.,  retrieving tables relevant to a query by RAG-top-k, or jumping to tables with known relationships to a target table by name). Our experiments
show that these capabilities are effective and necessary to scale to large lakes.

\underline{Budget-constrained optimization}: Despite our efficient, lightweight
indexing, processing every table in a large lake may still be costly. We
therefore formulate the \emph{Budget-Constrained Table Documentation} (BCTD)
problem, where given a fixed budget and a representative query log, we want to identify a subset of tables for processing that are most beneficial for the query workload. While we prove that the problem is computationally hard, we develop a fast density-based algorithm called  DensitySelect, which has provable approximation guarantees and is effective in experiments. 

\stitle{Contributions.} We make the following contributions:
\begin{itemize}[leftmargin=*,topsep=2pt,itemsep=1pt]
\item We propose \method, structured documentation for data lakes that
augments per-table descriptions with cross-table relationships such as confusable,
joinable, and union-able tables. We also provide agentic tools that make this
information easily accessible and navigable to LLMs (Section~\ref{sec:problem-setting}).
\item We develop lightweight, scalable approaches for extracting these
relationships, with a bounded linear construction cost despite the quadratic growth in possible table pairs
(Section~\ref{sec:method}).
\item We formulate the budget-constrained scenario as a principled optimization problem, 
and develop DensitySelect, a density-based algorithm with approximation guarantees
(Section~\ref{sec:budget-generation}).
\item We evaluate \method on diverse tasks such as data discovery, data analysis, and data science. Across 96 experimental settings (2 lakes x 6 backbone models x 8 benchmark tasks), \method outperforms diverse baselines including strong coding agents in 92 out of the 96 settings, all while operating at reduced cost/latency. 
(Section~\ref{sec:experiments}).
\end{itemize}


\section{Preliminaries and Problem Statement}
\label{sec:problem-setting}

\subsection{Preliminaries}
\paragraph{Data lake}
Let $\mathcal{L}=\{T_1,\ldots,T_n\}$ denote a collection of tables, each with a unique identifier, a schema and accessible rows.
For instance, in Example~\ref{ex:intro-sales},  we may encounter the many similarly-named tables (\code{Sales},
\code{Customers}, \code{Country}, etc.) in a multi-national's data lake, which can arise
from different regional
systems, historical sales records, and revenue summaries.

\paragraph{Data tasks on data lakes}
Users often need to perform diverse tasks on data lakes with thousands of tables, without knowing in advance
which tables in  $\mathcal{L}$ are relevant. We consider 3 such tasks in this work: data discovery, data analysis, and data science. 

 In \underline{data discovery}, let $Q$ be a user query with the intent to locate relevant tables in          $\mathcal{L}$ (e.g., ``\code{global sales by customer country since 2024}'' in Example~\ref{ex:intro-sales}), the goal is to retrieve the relevant set of tables that can be used to answer $Q$ (e.g., \{\code{Sales\_1}, \code{Sales\_14}, \code{Country\_1},
\code{Customer\_9}\} in Example~\ref{ex:intro-sales}).


In \underline{data analysis}, let $Q$ be a user's analysis query, and the goal is to answer $Q$ directly by executing Python or SQL code over relevant tables identified from $\mathcal{L}$. 
Today's reasoning-capable models can often interleave data discovery and analysis \cite{10.14778/3659437.3659452}, performing calculations as they explore tables (e.g., different \code{Sales} tables), and backtracking in a trial-and-error process, which can lead to very different trajectories compared to pure data discovery above.

In \underline{data science}, we consider the classical predictive modeling task of using a given target column $C$ in a table $T \in \mathcal{L}$ as the prediction target (which would be a regression task if $C$ is numerical,  or a classification task if $C$ is categorical). Agents performing such a predictive task may be able to leverage additional related tables in $\mathcal{L}$ (e.g., via joins or unions), which can boost predictive accuracy. 

\subsection{Problem Statement}
In this work, we study the benefit of providing agents with cross-table relationships in data lakes, by extracting 3 example types of common relationships: $\mathcal{R} = \{\text{Confusable, Joinable,}$
$\text{Union-able}\}$.
We study two problem variants, one without budget constraints and one with a fixed budget.

\begin{definition}
\label{def:no-budget}
\textbf{Relationship extraction in data lakes}. Given a data lake
$\mathcal{L}=\{T_1,\ldots,T_n\}$ and a set of target relationship types $\mathcal{R}$,
the goal is to extract relationships $\mathcal{R}$ on $\mathcal{L}$, denoted by:
\begin{align*}
&\mathcal{R}(\mathcal{L})=\{(T_i,T_j,r, d)\mid
\text{relationship }r\text{ holds for }T_i\text{ and }T_j\},\\[-2pt]
&\text{where }T_i,T_j\in\mathcal{L},\ i\ne j,r\in\mathcal{R}\ \text{and }d\text{ is its textual description}. \tag*{$\square$}
\end{align*}
\end{definition}

In our example with $\mathcal{R}=\{\text{Confusable},\text{Joinable},\text{Union-able}\}$, 
for each $T_i\!\in\!\mathcal{L}$, we use $\text{C}(T_i)$,
$\text{J}(T_i)$, and $\text{U}(T_i)$ to denote 
its Confusable, Joinable, and Union-able relationships, respectively.
The relationship $\mathcal{R}(\mathcal{L})$  we want to extract is therefore:
\[
\mathcal{R}(\mathcal{L})=\bigcup_{T_i\in\mathcal{L}}\bigl(
\text{C}(T_i)\cup\text{J}(T_i)
\cup\text{U}(T_i)\bigr).
\]

 Figure~\ref{fig:system-overview}(b) shows an example of such extracted relationships, presented in a markdown table format.
Given the pairwise nature of cross-table relationships, a key challenge is to produce such relationships accurately and efficiently (without quadratic comparisons), which we discuss in Section~\ref{sec:method}.

Although our approach solves the problem in Definition~\ref{def:no-budget} efficiently, at a cost as low as one cent per table using LLMs, extracting relationships for every table in large data lakes containing millions of tables may be neither necessary, nor desirable for users with a fixed budget. We therefore also study a budget-constrained variant of the problem, formalized in Definition~\ref{def:with-budget} below, which assumes access to a historical query log $QL$ that is representative of workloads on $\mathcal{L}$.


\begin{definition}
\label{def:with-budget}
\textbf{Relationship extraction in data lakes with a budget constraint}.  Given  a data lake $\mathcal{L}=\{T_1,\ldots,T_n\}$, a set of target cross-table relationships: $\mathcal{R}$, and extractable relationships $\mathcal{R}(\mathcal{L})$ as in Definition~\ref{def:no-budget}. Using a fixed budget $B$ that can be used to extract $\mathcal{R}(\mathcal{L})$, and a representative query log $QL$, select a subset $S \subseteq \mathcal{R}(\mathcal{L})$ expected to be the most beneficial for the given $QL$.
\end{definition}

While we state this budget-constrained version at a high level, we study a concrete combinatorial optimization problem corresponding to this high level definition in Section~\ref{sec:budget-generation}, for which we develop approximation algorithms with provable guarantees.

\section{\method}
\label{sec:method}

\subsection{System Overview}
At a high level, \method aims to build a relationship graph in large data lakes offline, enriched with natural-language context describing how tables differ or relate, which we posit will help agents to better navigate data lakes and perform complex tasks. Figure~\ref{fig:system-overview} gives an overview of our proposed \method system. 

In the offline indexing step shown in Figure~\ref{fig:system-overview}(a), we  extract three example cross-table relationships, $\mathcal{R} = \{\text{Confusable, Joinable,}$ $ \text{Union-able}\}$ (which are extensible to additional types of relationships), using lightweight processing that scales linearly with the number of tables if there are no budget constraints (Section~\ref{sec:offline}), or sub-linearly with a fixed budget constraint  (Section~\ref{sec:budget-generation}). In either case, the unit cost of our proposed processing is lightweight, costing as low as \$0.01 per table even in large data lakes, when open-source models are used.

Figure~\ref{fig:system-overview}(b) shows an example output, represented in a structured manner for navigation. Each row here represents a table, with basic table/column descriptions as well as statistical profiles shown in the second column ``\code{Basic info (B)}'', that can be produced by prompting LLMs similar to prior work~\cite{autoddg}. In addition, we represent the three example cross-table relationships $\mathcal{R} = \{\text{Confusable, Joinable, Union-able}\}$  as three additional columns to the right, with rich grounded context for why certain tables either differ or relate, 
which are made possible by today's LLMs.

In addition to cross-table relationships, we also study \emph{table-to-code relationships}, like illustrated in the last column of Figure~\ref{fig:system-overview}(b). 
\iftoggle{full}
{
}
{
    In the interest of space, we defer discussions of this relationship and its experiments to a full technical report~\cite{full}. 
}

At online query time in Figure~\ref{fig:system-overview}(c), given a user task on the data lake --- which can be data discovery, data analysis or data science --- we show that suitably-designed agentic \method tools can enable LLMs to navigate a large collection of tables, by jumping from an initial candidate table, to additional tables that may be confusable/joinable/union-able as needed using the index and the tools we build, 
which we show is both much more effective (with higher task success rates), and efficient (with lower token usage and latency). We  will explain the online step in Section~\ref{sec:online}.

\subsection{Offline Index: Extract Relationships}
\label{sec:offline-step}

We now describe how we instantiate \method by extracting three example table relationships, $\mathcal{R} = \{\text{Confusable}$, $\text{Joinable,}$  $\text{Union-able}\}$ in turn below,  to populate the corresponding fields labeled as $C(T)$/$J(T)$/$U(T)$, for each $T \in \mathcal{L}$ in Figure~\ref{fig:system-overview}(b). These are  on top of basic per-table descriptions in ``\code{Basic info (B)}'',  denoted by $B(T)$, that contain table/column descriptions, primary-keys, grains, and statistical profiles that we obtain by prompting LLM once per-table using implementations similar to prior work~\cite{autoddg, taco}.

\stitle{\underline{Confusable Tables}.} 
\label{sec:offline}
In a large data lake, it is not uncommon to have many similar-looking tables (e.g., 17 ``\code{Sales}'' tables like discussed in Example~\ref{ex:intro-sales}). Intuitively, we want our offline index to help LLMs to understand how these tables differ, and which ones are really relevant to a task at hand, without needing to open up each and every \code{Sales} table to inspect at query time. This type of relationship has received little attention in the literature, but is now possible to study with modern LLMs.

The key challenge here lies in efficiently finding pairs of possibly confusable tables in a large lake $\mathcal{L}$. While files with the identical name are clearly confusable, it is unclear if other  similarly-named tables are also confusable with a given target table (e.g., ``\code{Sales\_latest}'' with ``\code{Sales}''). Given the size of $\mathcal{L}$, we need to identify such potentially confusable tables efficiently without quadratic comparisons.

\begin{algorithm}[h]
\small
\caption{Query-driven discovery of confusable tables
}
\label{alg:confusable-discovery}
\Input{Tables $\mathcal{L}$, basic description $\{B(T)\}_{T\in\mathcal{L}}$
}
\Output{Symmetric confusable group pairs $E$ 
}
$E\leftarrow\emptyset$\;
\ForEach{$T\in\mathcal{L}$}{
  $Q_T\leftarrow$ Generate $k$ LLM-brainstormed questions about $T$\nllabel{line:confusable-probes}\;
  \ForEach{$q\in Q_T$}{
    $S_q\leftarrow$ top $N$ tables retrieved for $q$ by embedding in $\{B(T)\}_{T\in\mathcal{L}}$\nllabel{line:confusable-retrieve}\;
    $R_q\leftarrow$ LLM ranked list of tables in $S_q$ for $q$\; \nllabel{line:confusable-rank}
    \ForEach{$T'$ ranked higher than $T$ in $R_q$\nllabel{line:confusable-candidates}}{
    $E\leftarrow E\cup\{(T, T'),(T', T)\}$
    }
  }
}
\Return{$E$}\; \nllabel{line:return}
\end{algorithm}

We develop a \emph{query-driven} approach to identify such confusable tables, like shown in Algorithm~\ref{alg:confusable-discovery}. Specifically, for each target table $T$ that we want to find confusable tables for, we ask LLM to brainstorm a small number of $k$ (e.g., $k=3$) questions that this $T$ can be used to answer (Line~\ref{line:confusable-probes}). For each such question $q$, we compare $q$'s embedding with the embedding of the basic descriptions of all tables $\{B(T)\}_{T\in\mathcal{L}}$, and retrieve the top-N (e.g., $N=20$) candidates denoted as $S_q$ (Line~\ref{line:confusable-retrieve}). We then ask an LLM to rank $S_q$ using a ranked list $R_q$, ordered from the most-likely to least-likely related to $q$, so that the LLM will likely open and inspect files in that order (Line~\ref{line:confusable-rank}). Since we know $T$ can be used to answer $q$ (as $q$ is a question brainstormed from $T$), we treat all tables that rank higher in $R_q$ than $T$ as potentially confusable with $T$. We repeat the process and return the set of confusable table pairs as $E$ (Line~\ref{line:return}). 

Such pairs can then be sent to LLMs, together with their basic descriptions $B(T)$, as well as profiling information (e.g., grains, primary-keys, value-ranges and statistical profiles for each column) collected while generating $B(T)$, to generate a description that can best distinguish this pair of confusable tables, like illustrated in the column ``\code{Confusable tables (C)}'' of Figure~\ref{fig:system-overview}.

\begin{example}
\label{ex:confusable-discovery}
We continue with Example~\ref{ex:intro-sales} to show how confusable tables are generated. For each table $T$, we first ask LLM to brainstorm a small number of $k$ questions, $Q_T$, that can be answered by $T$. Suppose for the first table ``\code{Sales\_1}'' in Figure~\ref{fig:system-overview}, LLM generates a query $q \in Q_T$ as ``\code{store sales by country}''. We embed $q$ and compare its embedding with the embeddings of basic description of all tables, $\{B(T)\}_{T\in\mathcal{L}}$, to retrieve the top-N candidate tables with the highest embedding similarity to $q$, denoted by $S_q$ (which may include all 17 ``\code{sales}'' tables as well as others like ``\code{sales\_latest}''). 

We then ask the LLM to rank $S_q$ based on their table descriptions and a sample of rows for each table in $S_q$. Suppose the LLM-ranked list is $R_q =$ \{\code{sales\_5}, \code{sales\_17}, \code{sales\_1}, \ldots \}. We find that \code{sales\_5} and \code{sales\_17}  are likely confusable with \code{sales\_1} for the LLM, since they rank higher than \code{sales\_1}. 
To distinguish them, we ask the LLM to explain their differences. Perhaps \code{sales\_1} is for ``global sales'' while \code{sales\_5} is for ``US sales'', and \code{sales\_1} is for ``retail store sales'' while \code{sales\_17} is for ``stock sales'', etc., like shown in Figure~\ref{fig:system-overview}(b).  
\end{example}

\paragraph{\underline{Cost analysis}}
The per-table cost of annotating confusable tables for each target table $T$, denoted by $c(C)$, is at most $c(C) = (1 + kN)$ LLM calls, where both $k$ and $N$ are small constants. Here, 1 is for generating $k$ brain-stormed questions for $T$, and $kN$ is to explain all tables ranked higher than $T$ (which is at most $N$), for all $k$ questions.
Note that in practice the number of LLM calls is much smaller, as LLM is often highly capable and can rank the target $T$ high in its ranked list -- empirically, we find that on average LLM ranks only $1.7$ tables above the target $T$ (or $N = 1.7$ empirically). Regardless,  $c(C)$  scales only linearly in the size of the data lake $\mathcal{L}$ as both $k$ and $N$ are fixed constants.


\begin{figure*}[t]
\vspace{-18mm}
  \centering
  \begin{minipage}[t]{0.635\textwidth}
    \centering
    \includegraphics[height=44mm]{figures/join-example.png}
    \par\vspace{-2mm}
    \caption{An example for extracting joinable relationships.}
    \label{fig:join-example}
  \end{minipage}\hfill
  \begin{minipage}[t]{0.365\textwidth}
    \centering
    \includegraphics[height=42mm]{figures/union2.png}
    \par\vspace{-2mm}
    \caption{Union-able tables in data lakes.}
    \label{fig:union}
  \end{minipage}
  \par\vspace{-2mm}
\end{figure*}

\stitle{\underline{Joinable Tables}}
\label{sec:method-j}
While join detection is a classical topic in data management as we will review in related work (Section~\ref{sec:related}), detecting joinable tables efficiently and accurately in large data lakes with tens of thousands of tables is still challenging.

Leveraging recent advances in both LLM and embeddings, we develop a scalable and accurate approach for joinable relationships in large data lakes, that adapts key ingredients from classical join detection~\cite{zhang2010foreign, papenbrock2015binder, chen2014fast, rostin2009machine}, as outlined in Algorithm~\ref{alg:joinable-discovery}.

\begin{algorithm}[h]
\small
\caption{Extract joinable relationships}
\label{alg:joinable-discovery}
\Input{Tables in a lake $\mathcal{L}$}
\Output{Candidate joins $\mathcal{J}$}
$\mathcal{P}\leftarrow\emptyset$; $\mathcal{F}\leftarrow\emptyset$; $\mathcal{J}\leftarrow\emptyset$\;
\ForEach{$T\in\mathcal{L}$}{
  $(P_T,F_T)\leftarrow$ LLM-inferred PK/FK on $T$ using $B(T)$\nllabel{line:joinable-pkfk}\;
  $\mathcal{P}\leftarrow\mathcal{P}\cup P_T$; $\mathcal{F}\leftarrow\mathcal{F}\cup F_T$\;
}
$Emb(\mathcal{P})\leftarrow$ Embed $\mathcal{P}$ by (column names + sample values) \nllabel{line:joinable-embed}\;
\ForEach{$f\in\mathcal{F}$}{

    $Emb(f) \leftarrow$ Embed $f$ by (column names + sample vals) \nllabel{line:joinable-embed-fk}\;
    
    $C_f\leftarrow$ Retrieve top-N PK in $Emb(\mathcal{P})$ using $Emb(f)$, that pass PK/FK containment checks \nllabel{line:joinable-retrieve-topk}\;
    
  
    $P_f\leftarrow$ LLM validates joined-sample, ranks semantically plausible top-k\nllabel{line:joinable-validate}\;
    $\mathcal{J}\leftarrow\mathcal{J}\cup\{(f,p):p\in P_f\}$\nllabel{line:joinable-emit}\;

}
\Return{$\mathcal{J}$}\nllabel{line:joinable-return}\;
\end{algorithm}

\underline{[Key identification and enrichment]}. In Line~\ref{line:joinable-pkfk} of Algorithm~\ref{alg:joinable-discovery}, we start by using LLM to identify candidate primary-keys (PK) and foreign-keys (FK) in each table, using a single LLM call. Specifically, we provide LLM with each table's schema and sample rows, together with statistical profiles in $B(T)$ (uniqueness, distinct values, etc.). Using statistical and semantic information, we ask LLMs not only to identify candidate PK/FK in each table (which they can predict very accurately), but also to enrich each PK/FK with alternative column names that are semantically meaningful.

Figure~\ref{fig:join-example} shows such an example, where columns with blue boxes indicate LLM-predicted PK candidates, and columns with orange boxes indicate LLM-predicted FK candidates. LLM also enrich these PK/FK column names with alternative names, such as enriching the PK  ``\code{id}'' column in ``\code{Sales\_1}'' on the left, as ``\code{id}'' $\rightarrow$ ``\code{sales\_id}'', ``\code{transaction\_id}'' etc., as well as enriching the PK candidate ``\code{c\_id}'' in the two tables on the right, as ``\code{c\_id}'' $\rightarrow$ ``\code{country\_id}'', etc.

\underline{[Embed PKs]}. In Line~\ref{line:joinable-embed} of Algorithm~\ref{alg:joinable-discovery}, we then embed each candidate PK column, using their original and alternative column-names, together with column-values, to produce $Emb(\mathcal{P})$. 
This is shown in the example of Figure~\ref{fig:join-example} with the step marked as \textcircled{1}, where the PK column ``\code{c\_id}'' for table ``\code{Country\_5}'' is embedded using its original and enriched column names (``\code{c\_id}'', ``\code{country\_id}''), as well as sample-values from the column (``\code{1}'', ``\code{2}'', etc.). 

\underline{[Embed FK for top-K]}.
Line~\ref{line:joinable-embed-fk} embeds each candidate FK column $f$ using the same column-names with sample-values to produce $Emb(f)$; in Line~\ref{line:joinable-retrieve-topk}, we use $Emb(f)$ to retrieve top-N  $Emb(\mathcal{P})$, i.e., the PK columns with the highest similarity to the target $f$. The top-N candidates are checked for PK/FK value containment, with passing pairs retained.

Using the example in Figure~\ref{fig:join-example}, with the ``\code{FK \#2}'' in table ``\code{Sales\_1}'' as the target FK $f$ like shown in step \textcircled{2}, we would embed its column names and values, to retrieve its embedding top-K PK-candidates like shown in step \textcircled{3}, which include column ``\code{c\_id}'' from table ``\code{Country\_5}'' and  ``\code{Country\_9}''  shown on the right (note that these ``\code{c\_id}'' columns are ranked high as embedding top-K for  $Emb(f)$, because both are enriched with  meaningful col-names like ``\code{country\_id}'', thus producing good matches with $Emb(f)$). Both of these two ``\code{c\_id}'' columns would be checked for PK/FK value containment checks and pass such checks. 

\underline{[Semantic validation]}.
Finally, in Line~\ref{line:joinable-validate}, we perform an actual join using candidate PK/FK columns, and ask LLM to check samples from the joined row-pairs to validate semantic plausibility. For instance, in Figure~\ref{fig:join-example}, even though both ``\code{country\_id}'' columns from ``\code{Country\_5}'' and  ``\code{Country\_9}'' are retrieved as top-N, an LLM-validation step of the joined row-pairs would reveal the join with ``\code{Country\_5}'' as invalid, because joining the country-names from ``\code{Country\_5}'' into ``\code{Sales\_1}'' shows semantic incompatibility (a US address like ``\code{Springfield, IL}'' in country ``\code{afghanistan}''), which is thus pruned away; while the join with ``\code{Country\_9}'' is semantically plausible and retained as a final PK candidate for ``\code{FK \#2}'' (``\code{country\_id}'') in the figure.  All validated join relationships are then documented in \method like shown in column ``\code{Joinable tables (J)}'' of Figure~\ref{fig:system-overview}.

Note that for each FK candidate, Algorithm~\ref{alg:joinable-discovery} may return more than one PK candidate after all these steps, as in large data lakes PK/FK candidates may not be uniquely resolved. We do not force a unique PK/FK pairing in such cases, and use ``late-binding'' that retains all plausible pairs in \method, and defer this to query-time for LLM to resolve (where a user query may contain additional disambiguating information, such as referencing a column that is only present in one of the candidate pairs).

\paragraph{\underline{Cost analysis}} The per-table cost of extracting join-able relationships  for each table $T \in \mathcal{L}$ using Algorithm~\ref{alg:joinable-discovery}, denoted by $c(J)$, is at most $c(J) = (1 + |F_T|)$ LLM calls, where 1 is for the LLM call to identify PK/FK candidates in $T$ (Line~\ref{line:joinable-pkfk}), and $|F_T|$ indicates the number of FK candidates on a table $T$, where the $|F_T|$ LLM calls are for validating and ranking the semantic-plausibility for each of the $|F_T|$  PK/FK candidates (Line~\ref{line:joinable-validate}). Since $|F_T|$ is the number of candidate FKs per table, it is usually a very small constant (empirically, less than 2 for data lakes we tested). The cost $c(J)$ is therefore a small constant, making its cost linear in the number of tables in $\mathcal{L}$.


\stitle{\underline{Union-able Tables}}. \label{sec:method-u}
Union-able tables are common in data lakes, and are often partitioned by time or geographic boundaries, because data pipelines would often dump data at regular intervals (e.g., daily or hourly), thereby creating sub-partitioned tables with identical schemas and similar filenames following a fixed pattern, like the examples from real data lakes in Figure~\ref{fig:union} would show. This is widely reported in prior studies~\cite{union-partition, union-partition-1, union-partition-2} and is very common in real data lakes, as such union-able tables are typically ``machine-generated'' by recurring pipelines~\cite{union-partition-1, union-partition-2}.

For this reason, we use a reference implementation to extract union-able tables, by clustering tables based on schema-names, and then present LLM with all tables in each cluster $C$ (with sample rows and statistical profiles in $B(T)$ for each table $T \in C$), so that LLM can determine which subsets are union-able and along which attribute, which are documented like shown in column  ``\code{Union-able tables (U)}'' of Figure~\ref{fig:system-overview}.

Note that while our reference implementation for union-able tables focuses on schema-aligned table partitions, other existing union-search methods~\cite{nargesian2018table, fan2023starmie, khatiwada2023santos, bogatu2020d3l, hu2023autotus}
can be readily plugged into our framework through the same interface to enhance its coverage.

\paragraph{\underline{Cost analysis}}
Because each table $T\in\mathcal{L}$ can only fall into at most 1 cluster $C$,
each table will be scanned at most once by the LLM for union-able relationships. The per-table cost for union-able relationships is therefore $c(U) = 1$, which also grows linearly as the lake $\mathcal{L}$ grows.

\subsection{Online: Agentic Tools for \method}
\label{sec:online}

In a large data lake $\mathcal{L}$ with thousands or millions of tables, even succinct \method documentation like in Figure~\ref{fig:system-overview} can be large (e.g., it can have millions of entries with one for each table), which would be expensive to load entirely into an LLM agent's context window indiscriminately, as it degrades both the agent's latency and quality (e.g., lost in the middle~\cite{liu2024lost}). 

We therefore develop a set of agentic tools~\cite{qu2025tool, schick2023toolformer} to work with \method, like illustrated in Figure~\ref{fig:system-overview}(c). Today's LLMs are capable of invoking these tools appropriately, to selectively load only a relevant subset of \method, and follow observed relationship-pointers to navigate across related tables on demand. We describe a few core tools below in turn:


\paragraph{\underline{\texttt{Get\_readme\_by\_RAG\_top\_k(query, start\_pos, k)}}}
This ``RAG-top-k'' tool takes a user input query, and compares its embedding with the row-wise embedding of all table entries in \method to retrieve the top-k most relevant to the user query. By default we set \code{start\_pos} = 0, and $k = 50$ for agents to retrieve the first-50 matches, but the agent can always inspect the first-50 and if not finding relevant matches, retrieve the next $k$ by changing the \code{start\_pos} parameter and invoke this tool again, which would retrieve the next batch of matches, etc.

\paragraph{\underline{\texttt{Get\_readme\_by\_table\_names([table\_list])}}} This ``Retrieve-by-table-name'' tool allows an agent to specifically retrieve a list of readme entries by name. It is useful because an agent may inspect an initial set of matches, find a relevant starting table $T$, and then decide that there is a need to jump to another relevant table $T'$ that may be confusable/joinable/union-able with $T$ for additional inspection (e.g., to see what $T'$ actually contains, and what other tables $T'$ can connect to). Invoking this tool allows LLMs to jump directly to these related tables by name. 


\paragraph{\underline{\texttt{Explore\_table(table)}}}
This tool lets the agent actually open a table and explore data in a table, which agents can use for final verification or to resolve fine-grained details not present in \method. 

We also allow agents to use general python/sql tools, in addition to these readme-specific tools described above.

\begin{example}
\label{ex:tool}
We revisit Example~\ref{ex:intro-sales} in Figure~\ref{fig:system-overview} to show how an agent may use these tools to navigate \method. Suppose the input query is
\textit{``global retail sales by customer country since 2024''},  the agent first invokes \code{Get\_readme\_by\_RAG\_top\_k} to retrieve
the top-50 tables in readme that are the most similar to the query, which include many (but not all)
\code{Sales}, \code{Customer} and  \code{Country} tables in the lake. The agent inspects all returned tables' readme entries, identifying \code{Sales\_1} as a possible candidate after reading fields such as ``basic info (B)'' and ``confusable tables (C)'' in the example. 

From \code{Sales\_1}, it may read its ``joinable tables (J)'' and ``union-able tables (U)'' fields and decide to navigate these connections, to \code{Sales\_14} (union-able with \code{Sales\_1} to answer the query), as well as  \code{Country\_1} and  \code{Customer\_9} (to join in order to answer the question). These additional tables may not be in the initial batch retrieved using RAG-top-k, in which case the agent can invoke \code{Get\_readme\_by\_table\_names()} to directly retrieve the readme for these tables using names.

Note that the ``joinable tables (J)'' field may contain more than 1 plausible join for an FK column, if semantic-validation (Line~\ref{line:joinable-validate} of Algorithm~\ref{alg:joinable-discovery}) cannot uniquely identify one PK column to join with. In such cases readme has more than one join peers, which LLM can open and explore, to perform what we call a query-time ``late-binding'' to decide which plausible tables to actually bring in to answer user query.
\end{example}

\section{\method: Budget-Constrained Version}
\label{sec:budget-generation}
So far we studied how to extract relationships for all tables in a lake.
We now consider the budget-constrained version in Definition~\ref{def:with-budget}.

\subsection{Problem Formulation: BCTD}
\label{sec:budget-coverage-model}
While our system described above can already process tables efficiently and in a cost-effective manner (costing as low as 1 cent per table), on really large lakes with millions of tables, and/or when a user has a fixed monetary budget $B$, we can instead focus on a subset of tables that is expected to be the most beneficial.

Because users are unlikely to hand-pick or manually specify what tables are useful to index, we rely on a ``query log'' $QL$, that can be a set of historical jobs/scripts/sql-queries representative of future queries, which are often available~\cite{log1,log2} and are leveraged in prior studies~\cite{yan2020auto, yang2009recommending}. 

Intuitively, from the $QL$, we can extract what tables are being used and at what frequency, which is a proxy for how important each table $T$ is, and how beneficial it may be to index $T$. 
Let $QL=(Q_1,\ldots,Q_m)$ be the query log, we use 
$\mathcal{T}_i\subseteq\mathcal{L}$ to denote the set of tables used by each $Q_i$ (e.g., all tables accessed by this sql query or python script $Q_i$). 

Given a monetary budget $B$ for indexing, and since \method is indexing tables at the table granularity (Figure~\ref{fig:system-overview}), we reduce this to a table selection problem -- namely, we want to select a subset of tables  $D\subseteq\mathcal{L}$ for indexing, so as to maximize its possible benefit on $QL$. 

On the cost side, for a selected table set $D$, since each table populates the same  fields and follows the same steps in Section~\ref{sec:offline}, the offline cost of processing each table in $D$ is similar and can be pre-determined as a constant $\bar{\kappa}$ (e.g., based on the average number of tokens consumed per table, and LLM token price).  For a fixed budget $B$, this means that the number of tables we can select in $D$ cannot exceed $\lfloor B/\bar{\kappa}\rfloor$, or $|D| \leq \lfloor B/\bar{\kappa}\rfloor$.

On the utility side, for a selected table set $D$ that is indexed, the resulting readme is reliably beneficial for a query $Q_i \in QL$, only if all tables accessed in $Q_i$, written as $\mathcal{T}_i$, are contained in $D$ (as otherwise if a related table is missing from the readme, agents will likely have to fall back to brute-force exploration). We can therefore write the utility function     as 
  $F(D) = \sum_{i=1}^{m}\mathbf{1}\{\mathcal{T}_i\subseteq D\}$,
where $\mathbf{1}\{\cdot\}$ is the indicator function, and $F(D)$ indicates how many queries in the query log $QL$ can be ``covered'' by our selected tables $D \in \mathcal{L}$.

We are now ready to reduce the high-level problem stated in Definition~\ref{def:with-budget}, to an exact optimization problem that we term Budget-Constrained Table Documentation, defined below:

\begin{definition}
\label{def:cctd}
\textbf{Budget-Constrained Table Documentation (BCTD)}.
Given a data lake $\mathcal{L}$, a query log $QL = \{Q_1, \ldots, Q_m\}$, the set of tables used by each of the queries $\{\mathcal{T}_1, \ldots, \mathcal{T}_m\}$, a total budget $B$ and a per-table processing cost $\bar{\kappa}$, select a subset of tables $D \subseteq \mathcal{L}$, such that: 
\begin{align}
\max_{D \subseteq \mathcal{L}} \quad
& \sum_{i=1}^{m} \mathbf{1}\{\mathcal{T}_i \subseteq D\}
\label{eq:objective} \\
\text{s.t.} \quad
& |D| \leq \lfloor B/\bar{\kappa} \rfloor
\notag
\end{align}
\end{definition}


We first establish that BCTD is equivalent to \emph{Densest
$k$-Subhypergraph} (DkSH)~\cite{chlamtac2018densest}. Using this
equivalence, we develop a selection algorithm in the DkSH formulation
and map its solution back to BCTD.

\begin{lemm}[Equivalence to DkSH and hardness]
\label{lem:bctd-dksh-equivalence}
BCTD is polynomially equivalent to DkSH with repeated hyperedges
counted by multiplicity. The reductions preserve the budget and objective, therefore BCTD is NP-hard.
\end{lemm}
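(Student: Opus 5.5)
We prove the equivalence by building a hypergraph directly from the query log, and the reverse reduction by reading a DkSH instance as a query log. Recall that DkSH takes a hypergraph $H=(V,E)$ (here allowing repeated hyperedges, counted with multiplicity) and an integer $k$. It asks for a vertex set $S\subseteq V$ with $|S|\le k$ that maximizes the number of hyperedges $e\in E$ with $e\subseteq S$, i.e., the number of hyperedges induced by $S$. Note that ``at most $k$'' and ``exactly $k$'' are interchangeable when $k\le|V|$, since adding vertices never decreases the induced count. We will state this explicitly so that the size constraint of Definition~\ref{def:cctd}, namely $|D|\le\lfloor B/\bar{\kappa}\rfloor$, matches.

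\emph{BCTD $\to$ DkSH.} Given an instance $(\mathcal{L}, QL, \{\mathcal{T}_i\}, B, \bar{\kappa})$, we set $V=\mathcal{L}$. The multiset of hyperedges is $E=\{\mathcal{T}_1,\ldots,\mathcal{T}_m\}$, keeping duplicates: if two queries use the same table set, that set appears twice. We set $k=\lfloor B/\bar{\kappa}\rfloor$. For any $D\subseteq\mathcal{L}$ the BCTD objective $\sum_i \mathbf{1}\{\mathcal{T}_i\subseteq D\}$ equals the number of hyperedges of $E$ induced by $D$, counted with multiplicity. The feasible sets of the two instances also coincide. The map is therefore the identity on solutions and preserves both the budget and the objective value, and the construction is polynomial (linear) in the input size. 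A queried table set that is empty (a query touching no tables) is a degenerate hyperedge contained in every $D$; it adds a constant to both objectives, so we either keep it as an empty hyperedge or drop it and note the additive constant.

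\emph{DkSH $\to$ BCTD.} Conversely, given $(H=(V,E),k)$, we let $\mathcal{L}=V$ and create one query $Q_e$ with $\mathcal{T}_e=e$ for each hyperedge occurrence $e$. We set $\bar{\kappa}=1$ and $B=k$, so that $\lfloor B/\bar{\kappa}\rfloor=k$. The same identity of objectives and feasible sets holds, so the two problems are polynomially equivalent with budget and objective preserved.

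\emph{Hardness.} DkSH contains Densest $k$-Subgraph as the special case where every hyperedge has size $2$, and Densest $k$-Subgraph is NP-hard via a reduction from $k$-Clique: a graph has a $k$-clique iff some $k$ vertices induce $\binom{k}{2}$ edges. Composing this with the DkSH$\to$BCTD reduction shows that the decision version of BCTD is NP-hard. The only point needing care is the reverse direction, where we must choose $B$ and $\bar{\kappa}$ so that the floor yields exactly $k$ and must handle multiplicities faithfully. With $\bar{\kappa}=1$ and $B=k$ both are immediate, so we expect no real obstacle beyond stating these conventions precisely.
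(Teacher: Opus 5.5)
Your proposal is correct and follows essentially the same route as the paper. It gives identity-style reductions in both directions (tables as vertices, one hyperedge per query occurrence, at-most-$k$ versus exactly-$k$ handled by monotonicity) and inherits NP-hardness from the hypergraph problem; the only differences are cosmetic, namely making the reverse budget explicit with $\bar{\kappa}=1$, $B=k$, and grounding hardness in $k$-Clique via Densest $k$-Subgraph instead of citing NP-hardness of DkSH directly.
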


\iftoggle{full}
{
    \begin{proof}
    To reduce BCTD to DkSH, create vertices
    $V=\{v_T:T\in\mathcal{L}\}$ and one hyperedge
    $e_i=\{v_T:T\in\mathcal{T}_i\}$ per logged query $Q_i$.
    For a table set $D$, let $X_D=\{v_T:T\in D\}$. Then
    \[
    \begin{aligned}
      |X_D|&=|D|, \quad
      F_H(X_D)&=\sum_{i=1}^{m}\mathbf{1}\{e_i\subseteq X_D\}=F(D),
    \end{aligned}
    \]
    where $F_H$ counts contained hyperedges, including repetitions.
    
    Conversely, reduce DkSH to BCTD by creating one table per vertex
    and one logged query per hyperedge occurrence. Every vertex set $X$
    maps to $D_X=\{T:v_T\in X\}$ with $|D_X|=|X|$ and
    $F(D_X)=F_H(X)$. Both constructions are polynomial in the explicit
    query/hyperedge incidences and preserve the budget and objective.
    We use the at-most-$k$ form, which has the same optimum as selecting
    exactly $\min\{k,|V|\}$ vertices because $F_H$ is monotone.

    Since ordinary DkSH is a special case and is
    NP-hard~\cite{arulselvan2014setunion}, BCTD is NP-hard.
    \end{proof}    
}
{
We give a full proof of Lemma~\ref{lem:bctd-dksh-equivalence} in our technical report~\cite{full}.
}

\subsection{DensitySelect for BCTD}
\label{sec:budget-algorithm}
Based on Lemma~\ref{lem:bctd-dksh-equivalence}, we use DkSH to develop an algorithm for BCTD. We first merge identical hyperedges and assign each its query multiplicity as a weight without changing $F_H$.
We identify vertices with their tables and write $F$ for the common
objective.

\paragraph{\underline{Selection algorithm}}
Because hyperedges can share vertices, selecting multiple hyperedges costs the
size of their vertex union, not the sum of their individual sizes.
We conservatively approximate this union cost with additive costs
to obtain DensitySelect, an efficient density-based selection
algorithm for BCTD.

We first discard queries touching more than $k$ tables and set aside
empty-table queries, whose constant coverage is $w_0$. Merge identical
hyperedges from the remaining queries into distinct nonempty hyperedges
$A_1,\ldots,A_M$, with weight $w_j$ equal to the number of queries
represented by $A_j$. If $M=0$, the empty selection is optimal.
Otherwise, define
\[
  d_T=|\{j:T\in A_j\}|,\qquad
  \Delta=\max_T d_T,\qquad \ell=\max_j|A_j|.
\]
Here $\Delta$ is the maximum number of candidate hyperedges incident
to one vertex, and $\ell$ is the maximum hyperedge size.

In DensitySelect shown in Algorithm~\ref{alg:budget-fast}, we assign each hyperedge the additive cost $c_j=|A_j|$.
We then sort hyperedges by nonincreasing
$w_j/c_j$, admitting a hyperedge when the sum of admitted hyperedge costs
remains at most $k$, and skipping it otherwise. It compares the resulting
vertex union with the vertex set of the highest-weight individual hyperedge,
returning whichever covers more queries. Density and single-hyperedge
weight ties use hyperedge index; the packed selection wins a coverage tie.

We show that DensitySelect  in Algorithm~\ref{alg:budget-fast} has the following approximation guarantee.

\begin{algorithm}[t]
\small
\caption{DensitySelect: BCTD density-based selection}
\label{alg:budget-fast}
\Input{Candidate hyperedges $(A_j,w_j)_{j=1}^{M}$; vertex budget $k$}
\Output{Selected vertices $D_{\mathrm{alg}}$}
\If{$M=0$}{\Return{$\emptyset$}\;}
Set $c_j\leftarrow |A_j|$ and order hyperedges by nonincreasing $w_j/c_j$\;
$D\leftarrow\emptyset$; $b\leftarrow 0$\;
\ForEach{$j$ in this order}{
  \If{$b+c_j\le k$}{
    $D\leftarrow D\cup A_j$; $b\leftarrow b+c_j$\;
  }
}
$j^\star\leftarrow\argmax_j w_j$\;
\Return{$\argmax_{X\in\{D,A_{j^\star}\}}F(X)$}\;
\end{algorithm}

\begin{lemm}[Approximation and running time]
\label{lem:cctd-approximation}
For the weighted hypergraph above, let $F(D)$ denote the total weight
of hyperedges contained in a vertex set $D$, including the constant
weight $w_0$ of empty hyperedges.
Let $\mathrm{OPT}=\max_{|D|\le k}F(D)$. For $M>0$,
DensitySelect returns a set $D_{\mathrm{alg}}$ with
$|D_{\mathrm{alg}}|\le k$ and
  $F(D_{\mathrm{alg}})\ge\frac{1}{2\Delta}\,\mathrm{OPT}$.
Given the hyperedges, its running time is
$O(M\log M+M\ell)$.
\end{lemm}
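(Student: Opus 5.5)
Feasibility comes first. The packed set $D$ satisfies $|D|\le\sum_{j\text{ admitted}}|A_j|=b\le k$. The alternative set $A_{j^\star}$ has $|A_{j^\star}|\le k$, because every hyperedge larger than $k$ was discarded earlier. Hence $|D_{\mathrm{alg}}|\le k$. The empty-hyperedge weight $w_0$ is contained in every set, including the optimum, so it adds the same constant to both sides. I will prove the bound for the nonempty part and then add $w_0$ back, which only helps the ratio.

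For the approximation I would relate BCTD to a fractional knapsack. Fix an optimal $D^\ast$ with $|D^\ast|\le k$, and let $\mathcal{O}=\{j:A_j\subseteq D^\ast\}$. Each vertex of $D^\ast$ lies in at most $\Delta$ hyperedges, so double counting gives
\[
\sum_{j\in\mathcal{O}} c_j=\sum_{j\in\mathcal{O}}|A_j|\le \Delta\,|D^\ast|\le \Delta k .
\]
Thus $\mathcal{O}$ is a feasible solution of a knapsack with items $(w_j,c_j)$ and capacity $\Delta k$. Next I cut the capacity down to $k$. Sort $\mathcal{O}$ by density, or simply scale fractionally, to obtain a fractional knapsack solution of capacity $k$ with value at least $\mathrm{OPT}'/\Delta$, where $\mathrm{OPT}'=\mathrm{OPT}-w_0$. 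So the fractional knapsack optimum $\mathrm{LP}_k$ at capacity $k$ satisfies $\mathrm{LP}_k\ge \mathrm{OPT}'/\Delta$.

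The standard density-greedy argument then bounds $\mathrm{LP}_k$. Consider the first hyperedge $j_1$ that is skipped. The items admitted before $j_1$, together with $j_1$ itself, dominate the greedy fractional optimum. This gives
\[
\mathrm{LP}_k\le \sum_{\text{admitted}}w_j+w_{j_1}\le W_{\mathrm{pack}}+w_{j^\star}.
\]
If no hyperedge is ever skipped, the packed solution already contains every hyperedge and is optimal. Each admitted $A_j$ satisfies $A_j\subseteq D$, so $F(D)\ge w_0+W_{\mathrm{pack}}$. Likewise $F(A_{j^\star})\ge w_0+w_{j^\star}$. The maximum of these two is at least half their sum, so
\[
F(D_{\mathrm{alg}})\ge w_0+\tfrac12\mathrm{LP}_k\ge \tfrac{1}{2\Delta}\mathrm{OPT}.
\]

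The main obstacle is the greedy-dominance step, because skipping items and continuing past a skip is not the textbook prefix greedy. I will handle it by noting that the greedy prefix up to the first skip is exactly the prefix used in the fractional optimum. The fractional optimum takes that prefix in full plus a fraction of $j_1$, which is at most $w_{j_1}\le w_{j^\star}$. Later admissions only increase $W_{\mathrm{pack}}$, so they cannot hurt the bound.

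For the running time, computing the $c_j$ costs $O(M\ell)$ and sorting by density costs $O(M\log M)$. The packing loop performs unions totaling $O(M\ell)$ work, using a hash set or a marker array. Finding $j^\star$ takes $O(M)$. Evaluating $F$ on the two candidate sets takes $O(M\ell)$ membership checks.
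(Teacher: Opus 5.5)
Your proposal is correct and follows essentially the same approach as the paper: the double-counting bound $\sum_{j} c_j \le \Delta k$, the $1/\Delta$ scaling into the capacity-$k$ fractional relaxation, the bound of that relaxation by the pre-rejection weight plus $w_r$, and the best-of-two halving argument. The only difference is that you bound using the full packed weight rather than just the prefix before the first skip, which is harmless since later admissions only add weight.
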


\begin{proof}
The packed union contains at most $b\le k$ vertices, and every candidate
hyperedge has at most $k$ vertices. Thus either returned candidate is feasible.

Let $D^\star$ be an optimal vertex set and let
$J^\star=\{j:A_j\subseteq D^\star\}$ index its covered hyperedges.
Their additive cost satisfies
  $\sum_{j\in J^\star}c_j\le\Delta|D^\star|\le\Delta k$.
Taking a $1/\Delta$ fraction of each such hyperedge is therefore feasible
for the fractional relaxation of the additive-cost problem over nonempty
hyperedges with capacity $k$. If $V_{\mathrm{frac}}$ denotes its optimum, then
$V_{\mathrm{frac}}\ge(\mathrm{OPT}-w_0)/\Delta$.

If no hyperedge is rejected, every candidate hyperedge is contained
in the packed union, so the selection is optimal. Otherwise, let
$r$ be the first rejected hyperedge and $P$ the total weight admitted
before it. In density order, the fractional solution consists of this
prefix and at most a fraction of
$r$, so $V_{\mathrm{frac}}\le P+w_r$. The completed packing has value
at least $w_0+P$, while the highest-weight single-hyperedge candidate
has value at least $w_0+w_r$. The better candidate therefore satisfies
\[
\begin{aligned}
  F(D_{\mathrm{alg}})&\ge w_0+\tfrac12(P+w_r)\quad
  &\ge w_0+\frac{\mathrm{OPT}-w_0}{2\Delta} \quad
  &\ge\frac{\mathrm{OPT}}{2\Delta}.
\end{aligned}
\]

\noindent
Sorting takes $O(M\log M)$ time, while building the packed union and scoring
the candidates takes $O(M\ell)$, thus showing the $O(M\log M+M\ell)$ running time.
\end{proof}

Finally, we map the returned vertices to their corresponding tables.
The equivalence preserves the table budget and query log coverage,
so for $M>0$, Lemma~\ref{lem:cctd-approximation} yields a feasible
BCTD solution with at least a $1/(2\Delta)$ fraction of optimal
coverage. Since $M\le m$, its running time is also
$O(m\log m+m\ell)$.
Note that this algorithm uses only query-table incidence data and makes no LLM calls. 

\section{Experiments}
\label{sec:experiments}

We perform extensive evaluation to understand the effectiveness of \method on diverse tasks. 

\label{sec:experiment-setup}

\paragraph{\textbf{Data lakes}}
We create two experimental data lakes by pooling tables from public
benchmarks BIRD~\cite{bird}, Spider~\cite{spider}, KramaBench~\cite{kramabench}
and BI-Bench~\cite{bibench}:

\underline{\emph{\lakeoriginal{}}}: We pool all tables together from these 4 sources, and deduplicate at the table-level (as some public data sources are used by multiple benchmarks),
which results in 4,127 tables. We refer to this as \lakeoriginal{}.

\underline{\emph{\lakeaugmented{}}}: To create an even larger and more challenging lake, we use LLM to synthesize  group-by queries on randomly selected tables (to simulate user behaviors of performing aggregate analysis and then saving results~\cite{diamantini2026analytic}) to create 3,220 derived tables, and horizontally partitioned tables (to simulate partitioned-files that are common in data lakes~\cite{nargesian2018table, union-partition}, e.g. in Figure~\ref{fig:union}) to create 5,482 tables.
The resulting \emph{\lakeaugmented{}} contains 12,829 tables.

Table~\ref{tab:lake-statistics-cost-preview} summarizes the statistics of these lakes, and the offline construction costs to build \method on both lakes using all 6 backbone models tested. It can be seen that using open-source models like GPT-OSS-120B or DeepSeek-V4-Flash, the unit cost to process each table is as low as $\$0.01$, making \method economically practical without any budget constraints. 

\begin{table*}[tp]
\par\vspace{-19mm}
  \caption{Data lake statistics and construction costs.}
  \label{tab:lake-statistics-cost-preview}
  \vspace{-2mm} 
  
\savebox{\tblbox}{%
\small
\setlength{\tabcolsep}{4pt}
\begin{tabular}{@{}lrrrr@{\hspace{6pt}}|@{\hspace{6pt}}rrrrrr@{}}
\toprule
\multirow{2}{*}{Lake} & \multicolumn{4}{c@{\hspace{6pt}}|@{\hspace{6pt}}}{Statistics} & \multicolumn{6}{c}{Construction cost (USD, per-table / total)} \\
\cmidrule(lr){2-5}\cmidrule(lr){6-11}
 & Tables & Avg. rows & Avg. columns & Avg. cells & GPT-5.4 & GPT-5.4-mini & GPT-5.4-nano & GPT-OSS-120B & DeepSeek-V4-Pro & DeepSeek-V4-Flash \\
\midrule
\lakeoriginal{} & 4,127 & 97,359 & 14.71 & 554,125 & 0.299 / 1,235 & 0.090 / 371 & 0.024 / 101 & 0.017 / 71 & 0.061 / 253 & 0.015 / 63 \\
\lakeaugmented{} & 12,829 & 47,490 & 10.99 & 294,848 & 0.222 / 2,854 & 0.067 / 857 & 0.018 / 233 & 0.013 / 162 & 0.045 / 580 & 0.011 / 145 \\
\bottomrule
\end{tabular}}
\usebox{\tblbox}

  \par\vspace{-2mm} 
\end{table*}

\paragraph{\textbf{Tasks}.} We evaluate 3 diverse data tasks using these data lakes:

\underline{Data discovery}: In this task, an agent needs to identify  tables in a lake relevant for a user query. We use  queries in (BIRD, Spider, KramaBench, BI-Bench) as input, and measure \emph{exact-set match} 
between the top-$K$
predicted tables and the ground-truth tables, where $K$ is the size of the ground-truth.

\underline{Data analysis}: This task takes a user query and produces an answer using tables in a lake. We use queries in (BIRD, Spider, KramaBench, 
BI-Bench)\footnote{For data discovery and analysis, to avoid contamination from memorization in pre-training, we follow prior work~\cite{annotationerrors, drspider} to use paraphrased queries for all benchmarks.
    \iftoggle{full}
    {
         We also exclude ambiguous queries that can be answered by multiple tables in the lake via LLM-preprocessing.
    }
}, and report \emph{accuracy} \cite{bird,kramabench} by comparing the predicted answer with ground-truth answers from these benchmarks.

\underline{Data science}: Given a table $T$ and a target column $C \in T$, in this task an agent needs to build a predictive-model (regression for numerical targets and classification for categorical targets) to best predict the target column $C$. In a data lake setting, the agent can find additional external tables that may help improve
prediction accuracy by joining or unioning into $T$.
We evaluate using a set of 16 classification and 45 regression tasks\footnote{We construct this benchmark with 16 classification and
45 regression tasks, by selecting base tables with more than 1,000 rows and
declared joins, as well as a meaningful target column.},
where we measure accuracy for classification tasks, and $R^2$ for regression tasks.
We report \emph{gain recovery (GR)}, defined as 
 $\mathrm{GR} = \frac{\overline{\mathit{gain}}_{\mathrm{method}}}{\overline{\mathit{gain}}_{\mathrm{Gold}}}\times100\%$, which measures how much gain a method recovers on average ($\overline{\mathit{gain}}_{\mathrm{method}}$) relative to the average gain of a Gold method  ($\overline{\mathit{gain}}_{\mathrm{Gold}}$) \cite{chaudhuri1997efficient} that integrates all tables known to be related via joins/unions, followed by training using AutoGluon~\cite{erickson2020autogluon} to completion without a time limit.



\paragraph{\textbf{Baselines}}
We compare 12 representative methods from 4 categories below. Because tasks above like data discovery and analysis are interactive in nature,  we place a 4-turn limit where LLMs can interact with the environment at most 4 times, to make latency acceptable, which is fair across all methods (as otherwise we observe some baselines like coding-agents can spend \emph{over an hour} working on a data lake). We additionally compare results with no such turn-limit  in our sensitivity studies.

(1)~\emph{\underline{Cross-table annotation}.}
\textcircled{1} This is our proposed \method method. We report results using the variant that extract relationships in all tables (Section~\ref{sec:offline-step}) by default, and study the variant with budget (Section~\ref{sec:budget-generation}) in Section~\ref{sec:budget-experiments}.

(2)~\emph{\underline{Per-table annotation}.}
\textcircled{1}~AutoDDG~\cite{autoddg} is a representative method in the per-table category, that prompts LLM to generate semantic descriptions per table, without cross-table relationships.

(3)~\emph{\underline{Table retrieval methods}.} Since we are operating on a data lake for table tasks where retrieval of relevant tables is a key step, we compare with existing table-retrieval methods.
We configure each retrieval method in two settings: \emph{\underline{(i) one-shot}} where we feed the original query $q$ into the retrieval system, and directly use the returned top-K tables from each method for downstream tasks (data discovery/analysis/science); and \emph{\underline{(ii) agentic}} where we give an LLM full autonomy to rephrase query $q$ and invoke the retrieval system multiple times as needed, which is similar to the tool-use setting of \method. 

This produces a total of 8 baselines: 
\textcircled{2}~BM25 \cite{robertson1994some} and \textcircled{3}~BM25 (agentic) for lexical retrieval,
\textcircled{4}~Hybrid (RRF) \cite{cormack2009reciprocal} and \textcircled{5}~Hybrid (RRF) (agentic)
for lexical/vector rank fusion,
\textcircled{6}~LlamaIndex \cite{llamaindex} and \textcircled{7}~LlamaIndex (agentic)
for table-vector retrieval,
\textcircled{8}~Pneuma \cite{balaka2025pneuma} and \textcircled{9}~Pneuma (agentic) 
for table summary retrieval.


(4)~\emph{\underline{Coding agents}.} Modern coding agents such as Claude Code~\cite{claudecode}, Codex CLI~\cite{codexcli}, and GitHub Copilot CLI~\cite{githubcopilotcli}, are capable
across a broad range of tasks, making them strong baselines for comparison.
We use GitHub Copilot CLI as a representative frontier coding agent.
We evaluate it using the same 4-turn limit as other methods, labeled as
\textcircled{\scriptsize 10}~\emph{Coding agent}; as well as \textcircled{\scriptsize 11}~\emph{Coding agent (unlimited)}, where we give the coding agent an unfair advantage with unlimited turns, and allow it to explore the lake for as much as it wants. This serves as a really strong reference point for quality comparisons.


\paragraph{\textbf{Backbone models}}
We evaluate all methods with 6 different backbone models, for a total of $12 \times 6 = 72$ model/method combinations. When a backbone is selected, it is used consistently across all steps in each method, including offline processing/indexing and online querying. The models we tested include 3  proprietary models \textcircled{1} GPT-5.4; \textcircled{2}
GPT-5.4-mini and \textcircled{3} GPT-5.4-nano; and 3 open-weight models
\textcircled{4} GPT-OSS-120B, \textcircled{5} DeepSeek-v4 Pro, \textcircled{6} DeepSeek-v4 Flash.
We use default reasoning and temperature settings for all models.

\paragraph{\textbf{Other setups}}
We use OpenAI's \texttt{text-embedding-3-large} for text embeddings.
Local data processing and evaluation run on an Ubuntu 24.04
server with 96~vCPUs (AMD EPYC 7452) and 661~GiB of RAM;
LLM inference uses cloud-hosted APIs.

\subsection{Main Results: Quality Comparisons}
\label{sec:main-results}


\begin{figure}[H]
  \centering
  \vspace{-4mm}
  
\savebox{\tblbox}{%
\definecolor{dsgreen}{HTML}{009E73}
\definecolor{dsteal}{HTML}{56B4E9}
\definecolor{dsdark}{HTML}{333333}

\definecolor{dsblue}{HTML}{0072B2}
\definecolor{dsred}{HTML}{D55E00}
\definecolor{dspurple}{HTML}{7B3294}
\definecolor{dsgold}{HTML}{E6AB02}
\begin{tabular}{@{}c@{}}
{\fontsize{7}{8}\selectfont \tikz[baseline=-0.5ex]\draw[fill=dsred,draw=black,line width=0.3pt] (0,0) rectangle (6pt,5pt);\,\method{}\hspace{3pt}\tikz[baseline=-0.5ex]\draw[fill=dsblue,draw=black,line width=0.3pt] (0,0) rectangle (6pt,5pt);\,AutoDDG\hspace{3pt}\tikz[baseline=-0.5ex]\draw[fill=dsgreen,draw=black,line width=0.3pt] (0,0) rectangle (6pt,5pt);\,BM25\hspace{3pt}\tikz[baseline=-0.5ex]\draw[fill=dspurple,draw=black,line width=0.3pt] (0,0) rectangle (6pt,5pt);\,Hybrid\hspace{3pt}\tikz[baseline=-0.5ex]\draw[fill=dsgold,draw=black,line width=0.3pt] (0,0) rectangle (6pt,5pt);\,LlamaIndex}\\[-1pt]
{\fontsize{7}{8}\selectfont \tikz[baseline=-0.5ex]\draw[fill=dsteal,draw=black,line width=0.3pt] (0,0) rectangle (6pt,5pt);\,Pneuma\hspace{3pt}\tikz[baseline=-0.5ex]\draw[fill=gray!70,draw=black,line width=0.3pt] (0,0) rectangle (6pt,5pt);\,Coding agent\hspace{3pt}\tikz[baseline=-0.5ex]\draw[fill=dsdark,draw=black,line width=0.3pt] (0,0) rectangle (6pt,5pt);\,Coding agent + \method{}}\\[3pt]
\begin{tikzpicture}[font=\fontsize{7}{8}\selectfont]
\begin{groupplot}[
group style={group size=2 by 1,horizontal sep=4mm,ylabels at=edge left,yticklabels at=edge left},
width=3.25cm,height=2cm,scale only axis,
xmin=0.25,xmax=8.75,xtick=\empty,
ymin=0,ymax=120,enlarge y limits=false,
ytick distance=25,ylabel={Gain recovery (\%)},/pgf/bar width=6pt,
axis background/.style={fill=white},axis line style={black,line width=0.4pt},
axis lines=box,tick align=outside,major tick length=2pt,xtick pos=bottom,ytick pos=left,
tick label style={font=\fontsize{7}{8}\selectfont,inner sep=1pt},
label style={font=\fontsize{7}{8}\selectfont\bfseries,inner sep=1pt},
title style={font=\fontsize{7}{8}\selectfont\bfseries,inner sep=1pt},
scaled y ticks=false,grid=none,clip=false,
point meta=explicit symbolic,nodes near coords={\pgfplotspointmeta},
every node near coord/.append style={font=\fontsize{6.5}{7}\selectfont,text=black,inner sep=1pt,anchor=south,yshift=1pt},
]
\nextgroupplot[title={(a) Classification}]
\addplot+[ybar,bar shift=0pt,solid,fill=dsred,draw=black,line width=0.3pt,mark=none] coordinates {(1,71.95685291697468)[71.96]};
\addplot+[ybar,bar shift=0pt,solid,fill=dsblue,draw=black,line width=0.3pt,mark=none] coordinates {(2,53.53513315996003)[]};
\addplot+[ybar,bar shift=0pt,solid,fill=dsgreen,draw=black,line width=0.3pt,mark=none] coordinates {(3,48.27421402690724)[]};
\addplot+[ybar,bar shift=0pt,solid,fill=dspurple,draw=black,line width=0.3pt,mark=none] coordinates {(4,15.368521218320385)[]};
\addplot+[ybar,bar shift=0pt,solid,fill=dsgold,draw=black,line width=0.3pt,mark=none] coordinates {(5,13.458402488370162)[]};
\addplot+[ybar,bar shift=0pt,solid,fill=dsteal,draw=black,line width=0.3pt,mark=none] coordinates {(6,70.73614568651493)[]};
\addplot+[ybar,bar shift=0pt,solid,fill=gray!70,draw=black,line width=0.3pt,mark=none] coordinates {(7,24.80459658131115)[]};
\addplot+[ybar,bar shift=0pt,solid,fill=dsdark,draw=black,line width=0.3pt,mark=none] coordinates {(8,98.47011986989425)[98.47]};
\addplot[dsred,line width=0.5pt,dashed,mark=none,nodes near coords={}] coordinates {(0.25,71.95685291697468) (8.75,71.95685291697468)};
\nextgroupplot[title={(b) Regression}]
\addplot+[ybar,bar shift=0pt,solid,fill=dsred,draw=black,line width=0.3pt,mark=none] coordinates {(1,84.16973774583808)[84.17]};
\addplot+[ybar,bar shift=0pt,solid,fill=dsblue,draw=black,line width=0.3pt,mark=none] coordinates {(2,79.88777186507878)[]};
\addplot+[ybar,bar shift=0pt,solid,fill=dsgreen,draw=black,line width=0.3pt,mark=none] coordinates {(3,64.42470507566743)[]};
\addplot+[ybar,bar shift=0pt,solid,fill=dspurple,draw=black,line width=0.3pt,mark=none] coordinates {(4,40.700259059417036)[]};
\addplot+[ybar,bar shift=0pt,solid,fill=dsgold,draw=black,line width=0.3pt,mark=none] coordinates {(5,38.06268388643234)[]};
\addplot+[ybar,bar shift=0pt,solid,fill=dsteal,draw=black,line width=0.3pt,mark=none] coordinates {(6,78.16685859339874)[]};
\addplot+[ybar,bar shift=0pt,solid,fill=gray!70,draw=black,line width=0.3pt,mark=none] coordinates {(7,77.6238220056823)[]};
\addplot+[ybar,bar shift=0pt,solid,fill=dsdark,draw=black,line width=0.3pt,mark=none] coordinates {(8,99.8189948688358)[99.82]};
\addplot[dsred,line width=0.5pt,dashed,mark=none,nodes near coords={}] coordinates {(0.25,84.16973774583808) (8.75,84.16973774583808)};
\end{groupplot}
\end{tikzpicture}
\end{tabular}
}
\usebox{\tblbox}

  \vspace{-4mm}
  \caption{Data Science: average gain recovery (\%)}
  \vspace{-4mm}
  \label{fig:datascience}
  \label{fig:datascience-side-by-side}
\end{figure}

\begin{figure}[!t]
  \centering
  \includegraphics[width=\linewidth]{figures/budget_powerlaw_abc_preview.png}
  \vspace{-4mm}
  \caption{DensitySelect scalability vs. budget.}
  \vspace{-4mm}
  \label{fig:budget-knapsack-dataset-full}
  \label{fig:budget-generation}
  \label{fig:budget-physical-coverage}
\end{figure}

\smallskip\emph{\underline{Data discovery}.}
Table~\ref{tab:discovery-main} shows a comparison of all methods on quality, across all models (6), benchmarks (4), and lakes (2) combinations, for a total of $6 \times 4 \times 2 = 48$ settings.  \method outperforms all comparable baselines with the same turn-limit in 47 of 48 settings (with the remaining one being a tie). It significantly outperforms per-table annotation methods like AutoDDG, existing retrieval systems, and coding-agents.
\iftoggle{full}
{
}
{
    In the interest of space, we report detailed statistical significance of these comparisons in an extended version of this paper~\cite{full}.
}

\emph{\lakeaugmented{}} is clearly more challenging, where all methods have lower scores.  \method continues to show a substantial lead over all methods on this lake.
Even against the ``Coding agent (unlimited)'' reference point (which enjoys an advantage of having unlimited LLM turns), \method
achieves higher accuracy in 13 of the 16 settings.



\smallskip\emph{\underline{Data analysis}.}
Table~\ref{tab:analysis-main} shows a similar comparison for data analysis, where \method achieves the highest
score 
in 45 of 48 settings. 
As with data discovery,
\method generally outperforms the strong Coding agent (unlimited). It is encouraging to see that  \method using
smaller backbones can often outperform baselines using
stronger and more expensive backbones. We also note that on the challenging \emph{\lakeaugmented{}}, the lead of \method
often widens compared to \emph{\lakeoriginal{}{}}.

\begin{table*}[tp]
    \vspace{-18mm}
  \caption{Data discovery exact match (\%). Each cell: \lakeoriginal{} / \lakeaugmented{}.
   For Coding agent, --/-- denotes unavailable results because the
   corresponding model is not offered in Coding agent as a model option.}
  \label{tab:discovery-main}
  \vspace{-2mm}
  
\savebox{\tblbox}{%
\renewcommand{\best}[1]{\textcolor{red}{\textbf{#1}}}
\begin{tabular}{lrrrrrrrrrrr@{\hspace{12pt}}|@{\hspace{12pt}}r}
\toprule
model & {\scriptsize\shortstack{\method{}}} & {\scriptsize\shortstack{AutoDDG}} & {\scriptsize\shortstack{Pneuma}} & {\scriptsize\shortstack{Pneuma\\(agentic)}} & {\scriptsize\shortstack{BM25}} & {\scriptsize\shortstack{BM25\\(agentic)}} & {\scriptsize\shortstack{Hybrid\\(RRF)}} & {\scriptsize\shortstack{Hybrid\\(RRF)\\(agentic)}} & {\scriptsize\shortstack{LlamaIndex}} & {\scriptsize\shortstack{LlamaIndex\\(agentic)}} & {\scriptsize\shortstack{Coding agent\\(Copilot CLI)}} & {\scriptsize\shortstack{Coding agent\\(Copilot CLI, unlimited)}} \\
\midrule
\multicolumn{12}{l@{\hspace{12pt}}|@{\hspace{12pt}}}{\textbf{bird-paraphrase}} & \\
gpt-5.4 & \best{74.7}/\best{51.7} & 64.5/37.6 & 4.1/0.2 & 50.9/26.4 & 2.3/0.8 & 44.7/28.4 & 13.6/4.4 & 60.1/39.1 & 12.5/1.6 & 58.1/34.3 & 14.8/7.4 & 72.6/53.0 \\
gpt-5.4-mini & \best{57.0}/\best{37.4} & 49.6/20.9 & 5.7/0.7 & 39.6/20.5 & 2.3/0.8 & 30.5/15.1 & 13.6/4.4 & 45.0/26.1 & 12.5/1.6 & 44.0/25.5 & 8.7/4.4 & 50.1/31.4 \\
gpt-5.4-nano & \best{57.3}/\best{35.8} & 48.3/18.2 & 4.4/0.3 & 34.3/14.8 & 2.3/0.8 & 27.8/12.2 & 13.6/4.4 & 41.4/23.8 & 12.5/1.6 & 39.7/23.8 & \na/\na & \na/\na \\
gpt-oss-120b\_1 & \best{41.7}/\best{21.7} & 35.6/15.3 & 4.4/0.3 & 24.0/8.0 & 2.3/0.8 & 23.3/11.2 & 13.6/4.4 & 39.2/\best{21.7} & 12.5/1.6 & 36.8/21.0 & \na/\na & \na/\na \\
deepseek-v4-pro & \best{66.8}/\best{46.3} & 51.2/25.1 & 4.4/0.3 & 36.9/13.3 & 2.3/0.8 & 31.4/19.7 & 13.6/4.4 & 46.1/27.1 & 12.5/1.6 & 47.5/25.3 & \na/\na & \na/\na \\
deepseek-v4-flash & \best{60.8}/\best{37.9} & 49.3/21.8 & 3.0/0.0 & 25.8/9.2 & 2.3/0.8 & 23.5/9.9 & 13.6/4.4 & 45.2/24.0 & 12.5/1.6 & 46.0/22.7 & \na/\na & \na/\na \\
\midrule
\multicolumn{12}{l@{\hspace{12pt}}|@{\hspace{12pt}}}{\textbf{spider-paraphrase}} & \\
gpt-5.4 & \best{75.2}/\best{66.1} & 65.3/51.8 & 10.9/7.7 & 61.7/43.1 & 0.0/0.0 & 31.4/25.2 & 18.2/12.4 & 66.1/55.8 & 37.2/18.2 & 69.0/53.3 & 25.9/12.4 & 69.3/65.7 \\
gpt-5.4-mini & \best{64.6}/\best{56.6} & 58.4/41.2 & 11.3/6.9 & 58.4/36.9 & 0.0/0.0 & 23.7/10.9 & 18.2/12.4 & 60.9/47.4 & 37.2/18.2 & 59.9/46.4 & 16.8/12.4 & 51.1/41.2 \\
gpt-5.4-nano & \best{66.1}/\best{55.1} & 54.4/33.6 & 13.1/8.4 & 51.5/33.9 & 0.0/0.0 & 22.6/10.2 & 18.2/12.4 & 55.8/43.4 & 37.2/18.2 & 54.7/42.3 & \na/\na & \na/\na \\
gpt-oss-120b\_1 & \best{58.0}/\best{42.7} & 48.5/28.1 & 13.1/6.6 & 43.4/29.2 & 0.0/0.0 & 21.5/11.3 & 18.2/12.4 & 48.5/38.7 & 37.2/18.2 & 55.5/41.2 & \na/\na & \na/\na \\
deepseek-v4-pro & \best{69.0}/\best{62.0} & 55.8/40.9 & 12.0/6.6 & 51.8/33.2 & 0.0/0.0 & 26.3/18.6 & 18.2/12.4 & 57.3/47.8 & 37.2/18.2 & 61.3/51.1 & \na/\na & \na/\na \\
deepseek-v4-flash & \best{64.6}/\best{55.8} & 56.9/45.6 & 11.7/6.2 & 43.8/26.6 & 0.0/0.0 & 14.2/6.9 & 18.2/12.4 & 55.5/50.0 & 37.2/18.2 & 60.2/49.3 & \na/\na & \na/\na \\
\midrule
\multicolumn{12}{l@{\hspace{12pt}}|@{\hspace{12pt}}}{\textbf{bibench}} & \\
gpt-5.4 & \best{68.5}/\best{57.4} & 53.7/27.8 & 13.0/0.0 & 55.6/35.2 & 1.9/0.0 & 46.3/27.8 & 20.4/0.0 & 50.0/38.9 & 13.0/0.0 & 44.4/38.9 & 29.6/25.9 & 64.8/51.9 \\
gpt-5.4-mini & \best{51.9}/\best{42.6} & 42.6/25.9 & 13.0/1.9 & 42.6/29.6 & 1.9/0.0 & 42.6/18.5 & 20.4/0.0 & 48.1/31.5 & 13.0/0.0 & 44.4/31.5 & 18.5/9.3 & 55.6/37.0 \\
gpt-5.4-nano & \best{61.1}/\best{40.7} & 40.7/20.4 & 13.0/0.0 & 42.6/24.1 & 1.9/0.0 & 37.0/22.2 & 20.4/0.0 & 40.7/29.6 & 13.0/0.0 & 42.6/27.8 & \na/\na & \na/\na \\
gpt-oss-120b\_1 & \best{59.3}/\best{46.3} & 27.8/14.8 & 16.7/0.0 & 48.1/16.7 & 1.9/0.0 & 40.7/22.2 & 20.4/0.0 & 44.4/29.6 & 13.0/0.0 & 37.0/24.1 & \na/\na & \na/\na \\
deepseek-v4-pro & \best{59.3}/\best{50.0} & 40.7/27.8 & 11.1/0.0 & 51.9/22.2 & 1.9/0.0 & 44.4/22.2 & 20.4/0.0 & 55.6/35.2 & 13.0/0.0 & 44.4/27.8 & \na/\na & \na/\na \\
deepseek-v4-flash & \best{59.3}/\best{50.0} & 44.4/18.5 & 11.1/0.0 & 40.7/18.5 & 1.9/0.0 & 48.1/11.1 & 20.4/0.0 & 51.9/33.3 & 13.0/0.0 & 48.1/31.5 & \na/\na & \na/\na \\
\midrule
\multicolumn{12}{l@{\hspace{12pt}}|@{\hspace{12pt}}}{\textbf{kramabench-paraphrase}} & \\
gpt-5.4 & \best{63.5}/\best{67.1} & 62.4/64.7 & 7.1/5.9 & 50.6/41.2 & 0.0/0.0 & 41.2/22.4 & 16.5/12.9 & 61.2/57.6 & 18.8/17.6 & 60.0/60.0 & 35.3/23.5 & 65.9/63.5 \\
gpt-5.4-mini & \best{60.0}/\best{61.2} & 55.3/51.8 & 3.5/5.9 & 44.7/29.4 & 0.0/0.0 & 32.9/17.6 & 16.5/12.9 & 50.6/51.8 & 18.8/17.6 & 52.9/51.8 & 24.7/20.0 & 57.6/50.6 \\
gpt-5.4-nano & \best{64.7}/\best{60.0} & 54.1/49.4 & 8.2/9.4 & 45.9/40.0 & 0.0/0.0 & 38.8/20.0 & 16.5/12.9 & 60.0/51.8 & 18.8/17.6 & 54.1/48.2 & \na/\na & \na/\na \\
gpt-oss-120b\_1 & \best{55.3}/\best{51.8} & 48.2/40.0 & 9.4/4.7 & 28.2/24.7 & 0.0/0.0 & 30.6/20.0 & 16.5/12.9 & 54.1/47.1 & 18.8/17.6 & 49.4/45.9 & \na/\na & \na/\na \\
deepseek-v4-pro & \best{69.4}/\best{61.2} & 58.8/54.1 & 8.2/8.2 & 44.7/34.1 & 0.0/0.0 & 28.2/11.8 & 16.5/12.9 & 56.5/50.6 & 18.8/17.6 & 61.2/54.1 & \na/\na & \na/\na \\
deepseek-v4-flash & \best{61.2}/\best{63.5} & 52.9/47.1 & 10.6/7.1 & 32.9/28.2 & 0.0/0.0 & 16.5/4.7 & 16.5/12.9 & 57.6/49.4 & 18.8/17.6 & 55.3/54.1 & \na/\na & \na/\na \\
\bottomrule
\end{tabular}}
\usebox{\tblbox}

\readmetablenotes{%
\vspace{6pt}
\parbox{\wd\tblbox}{\footnotesize \textbf{Discovery exact-set match (\%): \lakeoriginal{} / \lakeaugmented{}.} Every cell reports \lakeoriginal{} (4,127 tables) on the left and \lakeaugmented{} (12,829 tables) on the right, never recall/exact. Red maxima are determined independently within each lake using unrounded scores, preserving ties and excluding Coding agent (unlimited). Lakes are not averaged. Each side retains its own missing-value markers, evaluation population and distinct-measurement missing-answer accounting. \textbf{discovery --- \lakeoriginal{}}. exact-set match (percent). Evaluation subsets: bird-paraphrase n=609, spider-paraphrase n=274, bibench n=54, kramabench-paraphrase n=85. Only tool-equipped metadata readers are shown; \method{} readers start from top50. Retrieval and agentic variants are separate methods. BM25, Hybrid (RRF) and LlamaIndex retrieval scores are model-independent and repeated for row alignment, not separate experiments. Pneuma retrieval uses model-specific offline summaries; Pneuma (agentic) combines native judged retrieval with the tools-enabled reader, retaining the original relevance judge for initial and paged retrieval; historical judge-only scores are not substituted. Hybrid uses BM25 + vector RRF. The unqualified Hybrid and Pneuma retrieval columns use raw top50; BM25 and LlamaIndex retrieval use top300. The coding agents use GPT-5.4 and GPT-5.4-mini on mixed-format lakes, not metadata-reader interfaces; unmeasured backbones remain unavailable. Unlimited has no work-call quota; 4 calls admits four individual work calls, including failures, without advance notice, plus separate answer-only submission. \method{} instead allows four successful metadata calls. Only completed, audited full-queryload results are shown, never partial or small100 estimates. Both native backbones now cover all four workloads in both settings. Mini measurements are fresh, never reused GPT-5.4 scores. Coding agent GPT-5.4-mini prefix4, spider: 1 no-answer/timeout runs retain zero scores in the complete denominator. Coding agent GPT-5.4-mini unlimited, bird: 1 no-answer/timeout runs retain zero scores in the complete denominator. Coding agent GPT-5.4-mini unlimited, spider: 1 no-answer/timeout runs retain zero scores in the complete denominator. Coding agent unlimited, bird: 2 no-answer/timeout runs retain zero scores in the complete denominator. Coding agent unlimited, kramabench\_para: 2 no-answer/timeout runs retain zero scores in the complete denominator. Coding agent unlimited, spider: 2 no-answer/timeout runs retain zero scores in the complete denominator. Pneuma (agentic) now covers all six measured backbones. The original three-model Pneuma campaign retains 419 failed records as zeros across 24 complete cells (6,132 question-level results); these are not missing rows. Nano Discovery uses the accepted paygo-discovery-v2 campaign output paths and shared non-ambiguous rescoring, never raw full-load sidecar averages. Failed records retain zero scores without rerunning or removing questions. Nano Pneuma (agentic) retains the native relevance judge for initial and paged retrieval; the unqualified Pneuma column remains raw fusion. All 125 failed records in Nano's 72-cell campaign (including the hidden BASE condition) remain zero-scored. GPT-5.4 Hybrid (agentic) on Krama uses rerun-20260919-v2-reportable85 (52/85 exact), retaining one failed answer as zero. Except for the newly populated Nano, DeepSeek and Pneuma (agentic) cells and this Hybrid rerun, metadata-reader scores retain the published 6d12c8e snapshot. The main comparison omits BASE and the no (table names) control; both remain in full diagnostic reports. \method{} (agentic) denotes the complete U/C/J tools condition, and AutoDDG (agentic) retains its tools condition. Coding agent (unlimited) is a reference-only column, separated on the far right because its work-call budget is not matched to the other methods. It is excluded from best-value highlighting. Best among the remaining methods in each row is \best{red}. Grey dashes mean unavailable; blank cells are not applicable. Missing answers retain zero in the fixed denominator; 13 missing result rows across distinct measurements. DeepSeek Flash/Pro use the accepted 336-cell matrix: 144 Discovery cells from paygo-discovery-v6 and 192 Analysis cells from paygo-analysis-v5, including hidden BASE and no (table names) controls. They retain their own model-specific measurements, fixed task populations and original failure zeros. Logged empty outputs and authorized filtered-empty outcomes remain in the denominator; unavailable historical tool counts are not invented. Flash retains its recorded deployment transitions and authorized continuations, not a single-version rerun. DeepSeek uses the recorded 8,192-token cap with provider default reasoning, not a matched explicit reasoning-effort setting. Native DeepSeek Coding agent measurements remain unavailable.  \textbf{discovery --- \lakeaugmented{}}. exact-set match (percent). Evaluation subsets: bird-paraphrase n=609, spider-paraphrase n=274, bibench n=54, kramabench-paraphrase n=85. Only tool-equipped metadata readers are shown; \method{} readers start from top50. Retrieval and agentic variants are separate methods. BM25, Hybrid (RRF) and LlamaIndex retrieval scores are model-independent and repeated for row alignment, not separate experiments. Pneuma retrieval uses model-specific offline summaries; Pneuma (agentic) combines native judged retrieval with the tools-enabled reader, retaining the original relevance judge for initial and paged retrieval; historical judge-only scores are not substituted. Hybrid uses BM25 + vector RRF. The unqualified Hybrid and Pneuma retrieval columns use raw top50; BM25 and LlamaIndex retrieval use top300. The coding agents use GPT-5.4 and GPT-5.4-mini on mixed-format lakes, not metadata-reader interfaces; unmeasured backbones remain unavailable. Unlimited has no work-call quota; 4 calls admits four individual work calls, including failures, without advance notice, plus separate answer-only submission. \method{} instead allows four successful metadata calls. Only completed, audited full-queryload results are shown, never partial or small100 estimates. Both native backbones now cover all four workloads in both settings. Mini measurements are fresh, never reused GPT-5.4 scores. Coding agent GPT-5.4-mini prefix4, kramabench\_para: 1 no-answer/timeout runs retain zero scores in the complete denominator. Coding agent GPT-5.4-mini unlimited, bird: 3 no-answer/timeout runs retain zero scores in the complete denominator. Coding agent prefix4, bird: 1 no-answer/timeout runs retain zero scores in the complete denominator. Coding agent unlimited, bird: 3 no-answer/timeout runs retain zero scores in the complete denominator. Pneuma (agentic) now covers all six measured backbones. The original three-model Pneuma campaign retains 419 failed records as zeros across 24 complete cells (6,132 question-level results); these are not missing rows. Nano Discovery uses the accepted paygo-discovery-v2 campaign output paths and shared non-ambiguous rescoring, never raw full-load sidecar averages. Failed records retain zero scores without rerunning or removing questions. Nano Pneuma (agentic) retains the native relevance judge for initial and paged retrieval; the unqualified Pneuma column remains raw fusion. All 125 failed records in Nano's 72-cell campaign (including the hidden BASE condition) remain zero-scored. Except for the newly populated Nano, DeepSeek and Pneuma (agentic) cells, metadata-reader scores retain the published 6d12c8e snapshot. The main comparison omits BASE and the no (table names) control; both remain in full diagnostic reports. \method{} (agentic) denotes the complete U/C/J tools condition, and AutoDDG (agentic) retains its tools condition. Coding agent (unlimited) is a reference-only column, separated on the far right because its work-call budget is not matched to the other methods. It is excluded from best-value highlighting. Best among the remaining methods in each row is \best{red}. Grey dashes mean unavailable; blank cells are not applicable. Missing answers retain zero in the fixed denominator; 11 missing result rows across distinct measurements. DeepSeek Flash/Pro use the accepted 336-cell matrix: 144 Discovery cells from paygo-discovery-v6 and 192 Analysis cells from paygo-analysis-v5, including hidden BASE and no (table names) controls. They retain their own model-specific measurements, fixed task populations and original failure zeros. Logged empty outputs and authorized filtered-empty outcomes remain in the denominator; unavailable historical tool counts are not invented. Flash retains its recorded deployment transitions and authorized continuations, not a single-version rerun. DeepSeek uses the recorded 8,192-token cap with provider default reasoning, not a matched explicit reasoning-effort setting. Native DeepSeek Coding agent measurements remain unavailable. }
}

  \par\vspace{-2mm}
\end{table*}

\begin{table*}[tp]
    \vspace{-18mm}
  \caption{Data analysis scores (\%). Each cell: \lakeoriginal{} / \lakeaugmented{}.
   For Coding agent, --/-- denotes unavailable results because the
   corresponding model is not offered in Coding agent as a model option.}
  \label{tab:analysis-main}
  \vspace{-2mm}
  
\savebox{\tblbox}{%
\renewcommand{\best}[1]{\textcolor{red}{\textbf{#1}}}
\begin{tabular}{lrrrrrrrrrrr@{\hspace{12pt}}|@{\hspace{12pt}}r}
\toprule
model & {\scriptsize\shortstack{\method{}}} & {\scriptsize\shortstack{AutoDDG}} & {\scriptsize\shortstack{Pneuma}} & {\scriptsize\shortstack{Pneuma\\(agentic)}} & {\scriptsize\shortstack{BM25}} & {\scriptsize\shortstack{BM25\\(agentic)}} & {\scriptsize\shortstack{Hybrid\\(RRF)}} & {\scriptsize\shortstack{Hybrid\\(RRF)\\(agentic)}} & {\scriptsize\shortstack{LlamaIndex}} & {\scriptsize\shortstack{LlamaIndex\\(agentic)}} & {\scriptsize\shortstack{Coding agent\\(Copilot CLI)}} & {\scriptsize\shortstack{Coding agent\\(Copilot CLI, unlimited)}} \\
\midrule
\multicolumn{12}{l@{\hspace{12pt}}|@{\hspace{12pt}}}{\textbf{bird-paraphrase (NL2SQL)}} & \\
gpt-5.4 & \best{43.3}/\best{43.8} & 36.7/31.1 & 19.7/13.2 & 30.2/22.1 & 16.0/10.7 & 26.9/22.1 & 26.1/19.5 & 36.2/26.6 & 23.6/15.7 & 34.0/25.8 & 4.8/5.3 & 28.9/26.9 \\
gpt-5.4-mini & \best{32.1}/\best{32.4} & 28.9/22.5 & 16.7/11.2 & 21.5/14.7 & 12.9/8.6 & 19.5/12.1 & 20.7/15.0 & 26.3/19.7 & 17.0/13.9 & 24.6/15.9 & 4.5/5.3 & 23.5/19.0 \\
gpt-5.4-nano & \best{31.4}/\best{29.6} & 26.0/19.2 & 10.9/8.4 & 14.5/9.4 & 11.7/7.9 & 15.4/10.9 & 16.2/12.2 & 22.8/16.9 & 16.0/12.1 & 21.8/15.5 & \na/\na & \na/\na \\
gpt-oss-120b\_1 & \best{24.0}/\best{19.3} & 9.1/8.3 & 12.4/8.4 & 12.9/9.4 & 11.4/7.3 & 14.0/10.7 & 16.7/12.2 & 19.3/13.6 & 15.7/11.9 & 17.9/11.6 & \na/\na & \na/\na \\
deepseek-v4-pro & \best{37.9}/\best{39.0} & 28.9/23.6 & 8.6/7.1 & 20.0/14.4 & 9.1/6.1 & 19.7/11.7 & 11.9/10.1 & 25.6/21.0 & 10.7/8.8 & 24.8/17.4 & \na/\na & \na/\na \\
deepseek-v4-flash & \best{33.6}/\best{32.6} & 26.6/19.7 & 7.3/6.0 & 15.5/9.4 & 8.6/5.8 & 15.9/9.8 & 12.1/8.6 & 25.6/17.9 & 8.8/7.8 & 22.6/17.4 & \na/\na & \na/\na \\
\midrule
\multicolumn{12}{l@{\hspace{12pt}}|@{\hspace{12pt}}}{\textbf{spider-paraphrase (NL2SQL)}} & \\
gpt-5.4 & \best{64.6}/\best{64.6} & 57.3/52.2 & 38.3/33.6 & 51.1/41.2 & 10.9/6.6 & 28.8/21.2 & 52.9/46.0 & 56.2/51.1 & 52.9/47.4 & 59.9/54.4 & 6.2/4.4 & 43.1/41.6 \\
gpt-5.4-mini & \best{57.3}/\best{57.3} & 47.8/47.4 & 36.9/30.7 & 42.3/33.9 & 10.2/6.2 & 22.3/11.7 & 44.5/40.5 & 49.6/46.0 & 45.6/40.1 & 52.2/49.3 & 5.1/4.0 & 33.9/30.7 \\
gpt-5.4-nano & \best{52.2}/\best{51.1} & 45.3/38.0 & 31.4/26.3 & 35.4/28.8 & 9.5/4.7 & 17.2/9.9 & 36.5/33.6 & 46.0/39.1 & 40.1/37.6 & 46.4/43.8 & \na/\na & \na/\na \\
gpt-oss-120b\_1 & \best{44.5}/\best{40.1} & 29.2/28.1 & 33.9/25.5 & 37.6/29.6 & 8.4/5.8 & 18.2/9.5 & 39.8/33.9 & 36.9/35.4 & 40.9/36.5 & 41.2/36.9 & \na/\na & \na/\na \\
deepseek-v4-pro & \best{52.9}/\best{54.7} & 48.2/45.3 & 25.2/21.2 & 34.7/29.2 & 7.7/4.7 & 19.3/10.9 & 23.0/23.4 & 48.9/44.5 & 29.9/28.5 & 52.2/45.6 & \na/\na & \na/\na \\
deepseek-v4-flash & \best{56.2}/\best{50.4} & 46.7/44.5 & 21.9/16.8 & 35.0/24.8 & 5.1/4.4 & 9.1/5.1 & 19.0/20.4 & 45.6/41.2 & 22.3/23.7 & 46.4/40.9 & \na/\na & \na/\na \\
\midrule
\multicolumn{12}{l@{\hspace{12pt}}|@{\hspace{12pt}}}{\textbf{bibench (NL2SQL)}} & \\
gpt-5.4 & \best{63.0}/\best{63.0} & 50.0/37.0 & 40.7/24.1 & 46.3/27.8 & 18.5/13.0 & 40.7/27.8 & 38.9/24.1 & 51.9/42.6 & 33.3/24.1 & 46.3/35.2 & 7.4/3.7 & 51.9/57.4 \\
gpt-5.4-mini & \best{51.9}/\best{51.9} & 38.9/22.2 & 29.6/20.4 & 29.6/24.1 & 20.4/11.1 & 37.0/22.2 & 31.5/20.4 & 40.7/29.6 & 40.7/14.8 & 42.6/24.1 & 1.9/1.9 & 42.6/40.7 \\
gpt-5.4-nano & \best{44.4}/\best{35.2} & 35.2/29.6 & 22.2/13.0 & 27.8/14.8 & 16.7/9.3 & 35.2/20.4 & 31.5/16.7 & 25.9/20.4 & 31.5/14.8 & 27.8/22.2 & \na/\na & \na/\na \\
gpt-oss-120b\_1 & \best{44.4}/\best{40.7} & 9.3/9.3 & 25.9/13.0 & 22.2/7.4 & 16.7/7.4 & 31.5/16.7 & 29.6/22.2 & 29.6/24.1 & 33.3/16.7 & 25.9/20.4 & \na/\na & \na/\na \\
deepseek-v4-pro & \best{57.4}/\best{55.6} & 38.9/29.6 & 13.0/0.0 & 31.5/13.0 & 11.1/5.6 & 31.5/13.0 & 16.7/9.3 & 35.2/29.6 & 22.2/9.3 & 40.7/27.8 & \na/\na & \na/\na \\
deepseek-v4-flash & \best{50.0}/\best{50.0} & 42.6/18.5 & 7.4/1.9 & 27.8/14.8 & 11.1/9.3 & 22.2/14.8 & 13.0/7.4 & 38.9/27.8 & 11.1/5.6 & 37.0/25.9 & \na/\na & \na/\na \\
\midrule
\multicolumn{12}{l@{\hspace{12pt}}|@{\hspace{12pt}}}{\textbf{kramabench-paraphrase (NL2Python)}} & \\
gpt-5.4 & 58.4/\best{59.6} & 56.5/59.2 & 42.2/39.3 & 53.5/44.8 & 21.8/17.6 & 54.2/42.0 & 51.8/52.7 & 57.4/57.9 & 45.7/47.7 & \best{58.7}/57.9 & 21.0/13.3 & 64.5/58.3 \\
gpt-5.4-mini & \best{52.2}/\best{51.9} & 36.8/41.6 & 17.7/18.6 & 38.8/33.4 & 8.9/4.5 & 35.0/16.1 & 25.8/31.5 & 39.8/35.9 & 27.7/26.4 & 43.4/44.9 & 14.9/12.9 & 61.1/50.7 \\
gpt-5.4-nano & \best{38.2}/\best{38.3} & 28.2/25.9 & 21.4/13.6 & 32.7/25.5 & 10.5/6.4 & 26.3/17.6 & 28.1/21.2 & 32.2/33.0 & 21.4/23.0 & 34.6/28.9 & \na/\na & \na/\na \\
gpt-oss-120b\_1 & \best{35.6}/\best{30.8} & 16.0/7.1 & 18.3/13.8 & 24.1/15.0 & 9.0/10.7 & 15.2/13.7 & 22.3/24.8 & 25.2/27.3 & 21.4/19.1 & 29.0/26.9 & \na/\na & \na/\na \\
deepseek-v4-pro & 39.7/\best{41.5} & 37.4/39.0 & 12.5/11.2 & 32.5/35.3 & 7.1/7.4 & 17.7/11.5 & 18.4/16.6 & \best{41.3}/33.6 & 23.7/22.2 & 35.5/35.7 & \na/\na & \na/\na \\
deepseek-v4-flash & 39.3/\best{41.3} & 31.5/21.5 & 15.4/19.1 & 26.3/22.8 & 7.4/3.1 & 22.6/4.9 & 13.5/14.7 & \best{40.2}/33.4 & 14.0/14.0 & 28.8/31.6 & \na/\na & \na/\na \\
\bottomrule
\end{tabular}}
\usebox{\tblbox}

\readmetablenotes{%
\vspace{6pt}
\parbox{\wd\tblbox}{\footnotesize \textbf{Analysis scores (\%): \lakeoriginal{} / \lakeaugmented{}.} Every cell reports \lakeoriginal{} (4,127 tables) on the left and \lakeaugmented{} (12,829 tables) on the right. BIRD/Spider use SQL EX, BI-Bench uses answer-table matching, and Krama uses its official NL2Python answer-type mean; these metrics are not pooled. Red maxima are determined independently within each lake using unrounded scores, preserving ties and excluding Coding agent (unlimited). Lakes are not averaged. Each side retains its own missing-value markers, evaluation population and distinct-measurement missing-answer accounting. \textbf{analysis --- \lakeoriginal{}}. NL2SQL execution accuracy / NL2Python Krama final-answer score (percent). Evaluation subsets: bird-paraphrase n=605, spider-paraphrase n=274, bibench n=54, kramabench-paraphrase n=98. Our metadata-reader columns use tools; \method{} readers start from top50. Unqualified Pneuma, BM25, Hybrid (RRF) and LlamaIndex columns use their full-report single-call condition: recommended top50 plus one external SQL/Python generation, without reader tools. Their (agentic) columns retain tools-enabled readers. Both are model-specific final-answer measurements; discovery retrieval scores are not substituted. The saved Pneuma single-call results use raw fusion without the relevance judge. BI-Bench uses answer-table comparison; BIRD and Spider use gold-SQL EX. Krama uses eligible98 (100 questions minus two geometry exclusions), not discovery's ambiguity subset. Its official answer-type mean is not pooled with SQL accuracy. Tools-enabled \method{} and metadata-baseline Python readers use metadata/sample tools without execution feedback; only the final program is executed, at most once, over the entire CSV lake. They allow four successful calls and 169,656-character complete-JSON tool replies; mini/OSS retain an 8,192-token reasoning-inclusive cap. For the original three backbones, our \method{} (agentic) column uses the original role-plan reader on full eligible98, retaining immutable first25 plus 73 additional eligible questions and two explicit exclusions. Recovery fragments keep native identities; provider refusals remain zero without resampling; interrupted attempts and cost provenance remain recorded. Other Python cells retain metadata-only-v1. These metadata readers require host-cpu-lease-v1. Nano Analysis uses accepted paygo-analysis-v1 outputs and frozen task-specific populations: SQL EX on 605/274/54 questions and Krama answer-type mean on eligible98. Nano Krama UCJ uses krama-role-plan-v1; its other displayed Python readers use legacy-v4. The original-backbone first25/recovery provenance does not apply to these fresh Nano runs. Nano retains the 8,192-token reasoning-inclusive completion cap. Terminal empty/malformed answers retain zero without new model calls; they are not missing result rows. Native Nano Coding agent is unmeasured. Nano's 96-cell campaign (including hidden BASE) retains 55 resolved terminal method failures as zeros. The coding agents use GPT-5.4 and GPT-5.4-mini on mixed-format lakes, not metadata-reader interfaces; unmeasured backbones remain unavailable. Unlimited has no work-call quota; 4 calls admits four individual work calls, including failures, without advance notice, plus separate answer-only submission. \method{} instead allows four successful metadata calls. Only completed, audited full-queryload results are shown, never partial or small100 estimates. Both native backbones now cover all four workloads in both settings. Mini measurements are fresh, never reused GPT-5.4 scores. Coding agent GPT-5.4-mini unlimited, bird: 1 no-answer/timeout runs retain zero scores in the complete denominator. Coding agent unlimited, bird: 1 no-answer/timeout runs retain zero scores in the complete denominator. Except for the new Nano and DeepSeek rows, non-Coding agent scores retain the published 6d12c8e snapshot. The main comparison omits BASE and the no (table names) control; both remain in full diagnostic reports. \method{} (agentic) denotes the complete U/C/J tools condition, and AutoDDG (agentic) retains its tools condition. Coding agent (unlimited) is a reference-only column, separated on the far right because its work-call budget is not matched to the other methods. It is excluded from best-value highlighting. Best among the remaining methods in each row is \best{red}. Grey dashes mean unavailable; blank cells are not applicable. Missing answers retain zero in the fixed denominator; 11 missing result rows across distinct measurements. DeepSeek Flash/Pro use the accepted 336-cell matrix: 144 Discovery cells from paygo-discovery-v6 and 192 Analysis cells from paygo-analysis-v5, including hidden BASE and no (table names) controls. They retain their own model-specific measurements, fixed task populations and original failure zeros. Logged empty outputs and authorized filtered-empty outcomes remain in the denominator; unavailable historical tool counts are not invented. Flash retains its recorded deployment transitions and authorized continuations, not a single-version rerun. DeepSeek uses the recorded 8,192-token cap with provider default reasoning, not a matched explicit reasoning-effort setting. Native DeepSeek Coding agent measurements remain unavailable. DeepSeek Analysis uses primary SQL EX on 605/274/54 questions and the Krama answer-type mean on eligible98. Its UCJ Python reader uses krama-role-plan-v1; hidden BASE uses krama-reliable-v1, and other Python readers use legacy-v4. These runs do not inherit the original backbones' first25/recovery composition. Flash LlamaIndex (agentic) on \lakeoriginal{} Krama instead uses the full eligible98 llamaindex-rerun-20260930-v1 measurement, selected after observing the rerun: 28.8 replaces 47.1, with nine proven empty final answers retained as zeros. It uses the current unversioned Flash pool rather than the historical 0731 pool; the GPT-5.4 rerun is not adopted.  \textbf{analysis --- \lakeaugmented{}}. NL2SQL execution accuracy / NL2Python Krama final-answer score (percent). Evaluation subsets: bird-paraphrase n=605, spider-paraphrase n=274, bibench n=54, kramabench-paraphrase n=98. Our metadata-reader columns use tools; \method{} readers start from top50. Unqualified Pneuma, BM25, Hybrid (RRF) and LlamaIndex columns use their full-report single-call condition: recommended top50 plus one external SQL/Python generation, without reader tools. Their (agentic) columns retain tools-enabled readers. Both are model-specific final-answer measurements; discovery retrieval scores are not substituted. The saved Pneuma single-call results use raw fusion without the relevance judge. BI-Bench uses answer-table comparison; BIRD and Spider use gold-SQL EX. Krama uses eligible98 (100 questions minus two geometry exclusions), not discovery's ambiguity subset. Its official answer-type mean is not pooled with SQL accuracy. Tools-enabled \method{} and metadata-baseline Python readers use metadata/sample tools without execution feedback; only the final program is executed, at most once, over the entire CSV lake. They allow four successful calls and 169,656-character complete-JSON tool replies; mini/OSS retain an 8,192-token reasoning-inclusive cap. For the original three backbones, our \method{} (agentic) column uses the original role-plan reader on full eligible98, retaining immutable first25 plus 73 additional eligible questions and two explicit exclusions. Recovery fragments keep native identities; provider refusals remain zero without resampling; interrupted attempts and cost provenance remain recorded. Other Python cells retain metadata-only-v1. These metadata readers require host-cpu-lease-v1. Nano Analysis uses accepted paygo-analysis-v1 outputs and frozen task-specific populations: SQL EX on 605/274/54 questions and Krama answer-type mean on eligible98. Nano Krama UCJ uses krama-role-plan-v1; its other displayed Python readers use legacy-v4. The original-backbone first25/recovery provenance does not apply to these fresh Nano runs. Nano retains the 8,192-token reasoning-inclusive completion cap. Terminal empty/malformed answers retain zero without new model calls; they are not missing result rows. Native Nano Coding agent is unmeasured. Nano's 96-cell campaign (including hidden BASE) retains 55 resolved terminal method failures as zeros. The coding agents use GPT-5.4 and GPT-5.4-mini on mixed-format lakes, not metadata-reader interfaces; unmeasured backbones remain unavailable. Unlimited has no work-call quota; 4 calls admits four individual work calls, including failures, without advance notice, plus separate answer-only submission. \method{} instead allows four successful metadata calls. Only completed, audited full-queryload results are shown, never partial or small100 estimates. Both native backbones now cover all four workloads in both settings. Mini measurements are fresh, never reused GPT-5.4 scores. Coding agent GPT-5.4-mini prefix4, spider: 1 no-answer/timeout runs retain zero scores in the complete denominator. Coding agent GPT-5.4-mini unlimited, bird: 4 no-answer/timeout runs retain zero scores in the complete denominator. Coding agent GPT-5.4-mini unlimited, spider: 1 no-answer/timeout runs retain zero scores in the complete denominator. Coding agent GPT-5.4-mini prefix4, bibench: 1 no-answer/timeout runs retain zero scores in the complete denominator. Coding agent unlimited, bird: 8 no-answer/timeout runs retain zero scores in the complete denominator. Coding agent unlimited, spider: 1 no-answer/timeout runs retain zero scores in the complete denominator. Except for the new Nano and DeepSeek rows, non-Coding agent scores retain the published 6d12c8e snapshot. The main comparison omits BASE and the no (table names) control; both remain in full diagnostic reports. \method{} (agentic) denotes the complete U/C/J tools condition, and AutoDDG (agentic) retains its tools condition. Coding agent (unlimited) is a reference-only column, separated on the far right because its work-call budget is not matched to the other methods. It is excluded from best-value highlighting. Best among the remaining methods in each row is \best{red}. Grey dashes mean unavailable; blank cells are not applicable. Missing answers retain zero in the fixed denominator; 27 missing result rows across distinct measurements. DeepSeek Flash/Pro use the accepted 336-cell matrix: 144 Discovery cells from paygo-discovery-v6 and 192 Analysis cells from paygo-analysis-v5, including hidden BASE and no (table names) controls. They retain their own model-specific measurements, fixed task populations and original failure zeros. Logged empty outputs and authorized filtered-empty outcomes remain in the denominator; unavailable historical tool counts are not invented. Flash retains its recorded deployment transitions and authorized continuations, not a single-version rerun. DeepSeek uses the recorded 8,192-token cap with provider default reasoning, not a matched explicit reasoning-effort setting. Native DeepSeek Coding agent measurements remain unavailable. DeepSeek Analysis uses primary SQL EX on 605/274/54 questions and the Krama answer-type mean on eligible98. Its UCJ Python reader uses krama-role-plan-v1; hidden BASE uses krama-reliable-v1, and other Python readers use legacy-v4. These runs do not inherit the original backbones' first25/recovery composition. }
}

  \par\vspace{-4mm}
\end{table*}

\smallskip\emph{\underline{Data science}.}
\label{sec:beyond-query-answering}
 Figure~\ref{fig:datascience} shows the quality comparison on data science tasks, where  
\method helps agents perform better on predictive
modeling by leveraging related tables -- it recovers 71.96\% of gold method's
gains on classification, and 84.17\% on regression.

We additionally tested a method labeled as ``Coding agent + \method'', shown as the rightmost method in the figure, where we provide our produced \method as an additional resource for the coding-agent. Compared to ``Coding agent'' without readme shown to its left, this readme-augmented coding-agent achieves substantial quality improvements -- its gain recovery increases from 24.80\% to
98.47\% on classification and from 77.62\% to 99.82\% on regression, again showing the benefit of our using readme.

\subsection{Main Results: with Budget-Constraints}
\label{sec:budget-experiments}

Figure~\ref{fig:budget-generation} evaluates our proposed method, DensitySelect
(Algorithm~\ref{alg:budget-fast}), for the budget constraint problem variant\footnote{We construct a query log by sampling 1000 queries from Bird/Spider with replacement and simulating a power-law distribution with $\alpha$ =1}, across different budget levels, with the table budget $k=\frac{B}{\bar{\kappa}}\in\{10,50,100,150,175,200,250\}$.  We compare results from these constrained configurations, against the full unconstrained version (\emph{All}). 

\smallskip\emph{\underline{Discovery quality in end-to-end tasks}.} Figure~\ref{fig:budget-generation}(a) shows that we selectively process tables while retaining near-full discovery quality with
substantially lower construction cost.
At $k=175$, documenting 126 BIRD tables yields a 76.01\% weighted Exact score, only losing 1.94\% from All; whereas construction cost falls from \$104.20 to \$22.17.
On Spider, 172 tables yield 84.08\% versus 86.76\%, a loss of
2.68 points, with cost falling from \$141.23 to \$26.64, showing that DensitySelect can indeed substantially reduce cost while maintaining quality.

\smallskip\emph{\underline{Solution quality using the objective function}.}
Figure~\ref{fig:budget-generation}(b) shows that DensitySelect can achieve near-optimal selection utility as defined in Equation~\eqref{eq:objective}. The dotted line is the true optimal utility (from brute-force enumeration), and the solid line is the utility of the solution produced by our algorithm, where our solution is close to theoretical optimum, showing the effectiveness of our approximation algorithm when measured by the raw objective function.


\smallskip\emph{\underline{Solution scalability}.}
Figure~\ref{fig:budget-generation}(c) shows that our algorithm runs efficiently and can scale to large query workloads as the query size grows -- With $k=200$, median
selection latency at $m=10000$ is 12.50\,ms on BIRD and
10.34\,ms on Spider. The trend with $m$ is consistent with the
$O(m\log m+m\ell)$  complexity discussed in
Section~\ref{sec:budget-algorithm}.

\begin{figure*}[!t]
  \centering
  \input{tables/coding_agent_sensitivity_aggregate.tex}
  \vspace{-4mm}
  \caption{Coding-agent quality and latency with and without \method.}
  \vspace{-4mm}
  \label{fig:coding-agent-sensitivity}
\end{figure*}

\subsection{Ablation Study}
\label{sec:coding-agents}

\textbf{Coding Agent Ablation: with/without Readme}.
Given the strength and generality of today's coding agents, we perform an additional test where we provide our produced \method to coding agent as an additional resource, and test its quality/latency comparisons with vs. without Readme, across different LLM-turn budget settings (1/4/8/16/unlimited).



\smallskip\emph{\underline{Answer quality}.}
Figure~\ref{fig:coding-agent-sensitivity} shows that merely providing readme to a coding agent improves its answer quality at every tested budget setting across all four
workloads. With 8 turns, \emph{Coding agent} with readme improves its counterpart without readme significantly by 16 to 30 percentage points, even when both variants have unlimited turns, the one with readme still retains sizable benefits. 

\smallskip\emph{\underline{Query latency}.}
Importantly, the quality gains of readme are accompanied by higher-efficiency / reduced-latency in
every setting, as readme allows the agent to perform tasks much more efficiently. With unlimited calls, the mean latency of using readme improve by up to 6x times, which represent significantly lower user wait time and better user experience.

\textbf{Ablation of different relationships.}
In Figure~\ref{fig:ablation-summary}, we evaluate how each relationship type extracted contributes to downstream task quality. Starting from Auto-DDG-like per-table table descriptions
and profiles (referred to as Base), we add  confusable tables (+C),
joinable tables (+J),  union-able tables (+U), as well as everything combined (+UCJ) and compare their quality results on benchmark queries involving multiple tables. 
%

The full method (+UCJ) achieves the best  on all four workloads across tasks. The strongest individual
component varies: +J leads on the SQL workloads, while +U leads
or ties on KramaBench, which are dependent on the characteristics of the underlying workload.

\begin{figure}[!htbp]
  \makebox[\columnwidth][l]{%
    \hspace*{-2.5mm}\begin{minipage}{\dimexpr\columnwidth+1mm\relax}
      \savebox{\tblbox}{%

\definecolor{ablblue}{HTML}{0072B2}%
\definecolor{ablgreen}{HTML}{009E73}%
\definecolor{ablpurple}{HTML}{7B3294}%
\definecolor{ablred}{HTML}{D55E00}%
\begin{tabular}{@{}c@{}}
{\fontsize{7}{8}\selectfont \tikz[baseline=-0.5ex]\draw[fill=gray!55,draw=black,line width=0.3pt] (0,0) rectangle (6pt,5pt);\,Base\hspace{6pt}\tikz[baseline=-0.5ex]\draw[fill=ablblue,draw=black,line width=0.3pt] (0,0) rectangle (6pt,5pt);\,+U\hspace{6pt}\tikz[baseline=-0.5ex]\draw[fill=ablgreen,draw=black,line width=0.3pt] (0,0) rectangle (6pt,5pt);\,+C\hspace{6pt}\tikz[baseline=-0.5ex]\draw[fill=ablpurple,draw=black,line width=0.3pt] (0,0) rectangle (6pt,5pt);\,+J\hspace{6pt}\tikz[baseline=-0.5ex]\draw[fill=ablred,draw=black,line width=0.3pt] (0,0) rectangle (6pt,5pt);\,+UCJ}\\[3pt]
\begin{tikzpicture}[font=\fontsize{7}{8}\selectfont]
\begin{groupplot}[
group style={group size=2 by 1,horizontal sep=1mm,
ylabels at=edge left,yticklabels at=edge left},
width=3.8cm,height=2cm,scale only axis,
xmin=0,xmax=29.8,
xtick={3.2,10.7,18.2,25.7},
xticklabels={BIRD,Spider,BI-Bench,Krama},
ymin=30,ymax=70,enlarge y limits=false,ytick distance=10,
ylabel={Score (\%)},/pgf/bar width=3.5pt,
axis background/.style={fill=white},axis line style={black,line width=0.4pt},
axis lines=box,tick align=outside,major tick length=2pt,xtick pos=bottom,ytick pos=left,
tick label style={font=\fontsize{7}{8}\selectfont,inner sep=1pt},
label style={font=\fontsize{7}{8}\selectfont\bfseries,inner sep=1pt},
title style={font=\fontsize{7}{8}\selectfont\bfseries,inner sep=1pt},
scaled y ticks=false,grid=none,clip=true,clip mode=individual,
point meta=explicit symbolic,nodes near coords={\pgfplotspointmeta},
every node near coord/.append style={font=\fontsize{6.5}{7}\selectfont,
text=black,inner sep=1pt,anchor=south,xshift=-3.5pt,yshift=1pt},
]
\nextgroupplot[title={(a) Discovery}]
\addplot+[ybar,bar shift=0pt,solid,fill=gray!55,draw=black,line width=0.3pt,mark=none] coordinates {(0.8,36.36363636363637)[] (8.3,44.73684210526316)[] (15.8,48.97959183673469)[] (23.3,50.98039215686274)[]};
\addplot+[ybar,bar shift=0pt,solid,fill=ablblue,draw=black,line width=0.3pt,mark=none] coordinates {(2.0,42.97520661157025)[] (9.5,45.6140350877193)[] (17.0,57.142857142857146)[] (24.5,54.90196078431372)[]};
\addplot+[ybar,bar shift=0pt,solid,fill=ablgreen,draw=black,line width=0.3pt,mark=none] coordinates {(3.2,37.8099173553719)[] (10.7,42.10526315789474)[] (18.2,53.06122448979592)[] (25.7,54.90196078431372)[]};
\addplot+[ybar,bar shift=0pt,solid,fill=ablpurple,draw=black,line width=0.3pt,mark=none] coordinates {(4.4,48.14049586776859)[] (11.9,57.01754385964912)[] (19.4,59.183673469387756)[] (26.9,52.94117647058823)[]};
\addplot+[ybar,bar shift=0pt,solid,fill=ablred,draw=black,line width=0.3pt,mark=none] coordinates {(5.6,48.34710743801653)[48.35] (13.1,58.771929824561404)[58.77] (20.6,63.265306122448976)[63.27] (28.1,56.86274509803921)[56.86]};
\nextgroupplot[title={(b) Analysis}]
\addplot+[ybar,bar shift=0pt,solid,fill=gray!55,draw=black,line width=0.3pt,mark=none] coordinates {(0.8,34.0956340956341)[] (8.3,42.98245614035088)[] (15.8,51.02040816326531)[] (23.3,51.196859348463136)[]};
\addplot+[ybar,bar shift=0pt,solid,fill=ablblue,draw=black,line width=0.3pt,mark=none] coordinates {(2.0,35.13513513513514)[] (9.5,43.85964912280702)[] (17.0,55.10204081632653)[] (24.5,57.76825633383011)[]};
\addplot+[ybar,bar shift=0pt,solid,fill=ablgreen,draw=black,line width=0.3pt,mark=none] coordinates {(3.2,36.17463617463618)[] (10.7,48.24561403508772)[] (18.2,55.10204081632653)[] (25.7,46.29136456740807)[]};
\addplot+[ybar,bar shift=0pt,solid,fill=ablpurple,draw=black,line width=0.3pt,mark=none] coordinates {(4.4,38.46153846153846)[] (11.9,50.87719298245614)[] (19.4,61.224489795918366)[] (26.9,53.92909978562161)[]};
\addplot+[ybar,bar shift=0pt,solid,fill=ablred,draw=black,line width=0.3pt,mark=none] coordinates {(5.6,42.41164241164241)[42.41] (13.1,53.50877192982456)[53.51] (20.6,61.224489795918366)[61.22] (28.1,59.40611866576871)[59.41]};
\end{groupplot}
\end{tikzpicture}
\end{tabular}%
}
\usebox{\tblbox}

\readmetablenotes{%
\vspace{6pt}
\parbox{\wd\tblbox}{\footnotesize \textbf{GPT-5.4 additive ablation by lake and workload (\%)}. Base / +U / +C / +J / +UCJ use the same task-eligible GT$>1$ questions, top50 retrieval and four-successful-call metadata-tool budget. Each single-component arm adds only that component to Base. Neither workloads nor lakes are averaged. Discovery reports exact@K: truncate to K physical tables before original-file grouping for Krama. Ranking order can therefore change exactness. Analysis reports SQL EX for BIRD/Spider/BI-Bench and the fractional Krama final-answer score; never pool these metrics. Denominators in BIRD/Spider/BI-Bench/Krama order: Discovery: 484/114/49/51. Analysis: 481/114/49/61. Python Base retains krama-reliable-v1; +U/+C/+J/+UCJ use krama-role-plan-v1, so Base comparisons do not isolate metadata from reader-profile differences. The full source report makes post-outcome selections of three complete two-lake conditions and one complete augmented-lake condition: BI-Bench Discovery +U and Krama Analysis +UCJ are unmodified reruns; Krama Discovery +U uses role-order guidance, including ordering-only exact gains. \lakeaugmented{} BI-Bench Discovery +UCJ uses joint-evidence guidance: 31/49 versus 30/49 for its unchanged UCJ control (two gains, one loss). These seven replaced cells are not repeat averages, per-query best-of mixtures, held-out gains or evidence of stable causal improvement. Original records and unsuccessful trials are retained. Sources: additive-gt-multitable-20260924 active report, validated through its read-only active-report verifier and byte-identical historical archives; no source checks are waived or experiments rerun. All failure zeros are retained, including 13 missing-answer zeros. Colors identify Base/+U/+C/+J/+UCJ, not winners. Horizontal labels show only +UCJ. Both panels share labelled, data-dependent percentage limits; the lower bound can be nonzero. All native percentages, task-specific denominators, independent maxima and ties are retained in vector comments. No workloads or task metrics are averaged.}
}
    \end{minipage}%
  }
  \vspace{-4mm}
  \caption{Ablation study of \method.}
  \label{fig:ablation-summary}
  \label{tab:ablation-summary}
\end{figure}

\subsection{Sensitivity and Efficiency}
\label{sec:sensitivity}


\underline{Sensitivity to embedding models}.
We replace the default embedding model,
OpenAI's \texttt{text-embedding-\allowbreak 3-large}, with
three lightweight embedding models that can run locally on CPUs: 
MiniLM-L6~\cite{minilm}, BGE-small~\cite{bge_embedding}, and
E5-small~\cite{wang2022text}. Table~\ref{tab:retriever-sensitivity-augmented} shows a comparison on discovery and analysis quality using \lakeaugmented{}.
With lightweight embedding models, \method performance drops by 2-4 percentage points, but is still significantly better than the best baseline, Hybrid RRF, on both tasks, illustrating that our quality gains are not dependent on big embedding models.

\begin{table}[tbp]
\vspace{-0mm}
  \caption{Embedding-model sensitivity (\%).}
  \label{tab:retriever-sensitivity-augmented}
  \vspace{-3mm}
  
\savebox{\tblbox}{%
\begingroup\fontsize{8.5}{10}\selectfont
\setlength{\tabcolsep}{1.5pt}
\begin{minipage}{8.3cm}
\begin{tabular*}{\linewidth}{@{\extracolsep{\fill}}lrr@{}}\toprule
Method & Discovery (\%) & Analysis (\%) \\ \midrule
\method + text-embedding-3-large & 60.56 & 57.74 \\[2pt]
\method + all-MiniLM-L6-v2 & 56.65 & 53.37 \\[2pt]
\method + bge-small-en-v1.5 & 58.58 & 54.56 \\[2pt]
\method + e5-small-v2 & 58.55 & 54.14 \\[2pt]
\midrule
Best baseline - Hybrid RRF agentic & 47.86 & 44.55 \\[2pt]
\bottomrule
\end{tabular*}
\end{minipage}
\endgroup
}
\usebox{\tblbox}

\readmetablenotes{%
\vspace{6pt}
\parbox{\wd\tblbox}{\footnotesize \textbf{Embedding-model sensitivity.} Final Discovery EX is exact-set match, not recall or initial retrieval coverage. Analysis uses SQL EX for BIRD/Spider, accepted-answer-table accuracy for BI-Bench, and native mean final-answer score for Krama. All scores are percentages, for \lakeaugmented{} only. Each task column is the equal-weight macro mean of BIRD, Spider, BI-Bench and Krama scores, not a pooled question-level accuracy. Lakes are not averaged. GPT-5.4 / UCJ retains top50 retrieval and four successful metadata calls. Discovery denominators are 609/274/54/85; Analysis uses 605/274/54/98, in benchmark order, per lake. The complete two-lake Analysis campaign includes undisplayed conditions. The 32 local-encoder Analysis cells contain 8,296 native records and 8,248 scored measurements; the 14 previously completed result files are unchanged. Historical Azure reader outputs rescored on the identical current question sets, not fresh paired runs; historical prompt equivalence is not certified. Best baseline is one comparable GPT-5.4 main-table method selected by the equal-weight mean of its Discovery and Analysis macro scores. Both displayed scores belong to that same method, not per-workload or per-task maxima. Only methods measured on all four workloads in both tasks are eligible; \method{} and unmatched unlimited are excluded, while ordinary Coding agent is eligible. Exact ties use main-table method order and are retained in metadata. The no-\method{} control is omitted from display but retained in source metadata. Reference results are reused, never rerun. Krama's local \method{} readers use role-plan, whereas the historical no-\method{} and external metadata readers retain metadata-only-v1; this is not a strict retriever-only comparison against those references. Local encoders retain their recommended prefixes and native token limits. Differences cannot be attributed to model weights alone; no significance is claimed.}
}

  \par\vspace{-2mm}
\end{table}

\begin{figure}[!t]
\vspace{-18mm}
  \input{tables/construction_scaling.tex}
  \vspace{-2mm}
  \caption{Construction time and estimated cost: (a) total (s, USD);
    (b) per table (s/table, USD/table).}
    \vspace{-6mm}
  \label{fig:construction-scaling}
\end{figure}

\underline{Construction Efficiency and Sensitivity}.
We measure offline construction cost/latency as the number and size
of tables increase. To vary table count, we perform constructions on sampled subsets
with 200, 400, 800, and 1,600 tables from \lakeoriginal{}.
Figure~\ref{fig:construction-scaling}(a) reports elapsed construction time
and LLM cost as we increase the number of tables, which grows linearly with the number of tables.

We also vary the table size (the number of rows in a table), by 
selecting 100 tables from the lake each with at least 100K rows, 
and then down-sample each to 100, 1K, 10K, and 100K rows, respectively, and perform construction experiments. 
Figure~\ref{fig:construction-scaling}(b) shows that LLM cost and latency are not sensitive to row count:
increasing row count from 100 to 100K raises cost and latency only slightly, which shows the scalability of our approach to large tables and large lakes.



\iftoggle{full}
{
    \subsection{Incorporating code-to-table relationships}
    \label{sec:online-setting}
    
    \begin{figure}[t]
      \input{tables/online_study.tex}
      \vspace{-4mm}
      \caption{Query history improves data discovery and analysis.}
      \vspace{-4mm}
      \label{fig:online-study}
      \label{fig:online-setting}
      \label{fig:online-analysis}
      \label{fig:online-sensitivity}
    \end{figure}
    
    Historical SQL queries complement table metadata by providing examples of how tables are used ~\cite{gao2023text}, and can be attached to each table using the last column shown in Figure~\ref{fig:system-overview}. We evaluate whether \method{} 
    benefits from this additional context on BIRD and Spider in \lakeaugmented{}. We use 80\% of the queries to construct a historical query log $QL$ and reserve the remaining 20\% for evaluation. To integrate the knowledge of $QL$, We store each SQL in $QL$ in the \texttt{snippet} field of the \method{} entry for every table it references.    These examples are then available alongside retrieved \method{}. 
    
    Figure~\ref{fig:online-setting} shows that query history improves both Discovery Exact and SQL execution accuracy on BIRD and Spider. On BIRD, the gains are 5.79 and 4.13 percentage points, respectively. These gains suggest that historical queries provide
    useful context.
}
{
    \textbf{Additional results}. We present additional experimental results, such as additional details with statistical significance, predicted relationship quality when such ground-truth exists, and studies on the benefit of incorporating table-to-code relationships, in our technical report~\cite{full} in the interest of space.
}

\section{Related work}
\label{sec:related}

\stitle{Dataset annotation using LLMs.}
In addition to earlier work that uses traditional methods to expand the semantics of tables and columns~\cite{sherlock,sato,doduo, cortez2015annotating, auto-type}, recent work such as AutoDDG~\cite{autoddg} and TACO~\cite{taco} use LLMs to generate dataset and column descriptions based on table content. \method  addresses a different limitation: independently annotating each table
does not explain relationships between tables. We therefore augment per-table descriptions with structured,
lake-wide relationships and develop tools that let agents retrieve and traverse them, which are shown to improves the quality and efficiency of downstream data tasks.

\stitle{Join discovery and search.}
There is a large literature on \emph{join discovery} that aims to identify all joins from a set of tables, where most prior work focus on join discovery in smaller-scale, database settings, often with tens of tables~\cite{zhang2010foreign, papenbrock2015binder, chen2014fast, rostin2009machine}. Scalable \emph{joinable-table search} has also been studied  using efficient value-overlap methods~\cite{zhu2019josie}, where the goal is to find joinable tables for \emph{one} target table.  There is limited study thus far on scaling \emph{all-pair join discovery}  to large data lakes with tens of thousands of tables, while efficiently producing semantically plausible join relationships.
Our setting additionally requires identifying and documenting semantically plausible PK/FK relationships across the lake, where value overlap alone may produce spurious connections. 



\stitle{Table union search.} There is a long and fruitful line of research on table union search~\cite{nargesian2018table, fan2023starmie, khatiwada2023santos, bogatu2020d3l, hu2023autotus}, which retrieves tables whose columns are semantically compatible to one target search table.
While we use schema-aligned table partitions as a reference implementation for union-able relationships in this work, our focus is on documenting these relationships for downstream agents; alternative union-search methods could be easily plugged into \method using the same relationship interface.


\stitle{Relationship-based data discovery}. Systems like Aurum~\cite{fernandez2018aurum} construct and maintain a lake-wide graph of schema similarity, content similarity, and key relationships. Data.Readme shares the principle of precomputing lake-wide knowledge, but focuses on developing an  explanatory documentation framework specifically designed for LLM agents, with new relationships such as confusable relationships, and extensible to new relationship types.


\vspace{-2mm}
\section{Conclusions and Future Work}
We presented \method, structured documentation that helps LLM agents navigate
large data lakes using offline relationship extraction. 
Our experiments show the promise of \method as reusable, relationship-aware documentation that can
serve as an effective interface between data lakes and LLM agents. 

\FloatBarrier





\bibliographystyle{ACM-Reference-Format}
\bibliography{Data.Readme}
\clearpage

    


\end{document}